\newcommand{\gt}{the \emph{GT criterion}\xspace}
\newtheorem{lemma}{Lemma}
\newtheorem{thm}{Theorem}
\newtheorem{prop}{Proposition}
\newtheorem{cor}{Corollary}
\newcounter{subalgorithm}
\newenvironment{subalgorithms}{
	\stepcounter{algorithm}
	\setcounter{subalgorithm}{0}
	\let\old@fnum@algorithm=\fnum@algorithm%
	\def\fnum@algorithm{%
		\addtocounter{algorithm}{-1}%
		\stepcounter{subalgorithm}%
		\old@fnum@algorithm}%
}
\def\ps@headings{%
	\def\@oddhead{\mbox{}\scriptsize\rightmark \hfil \thepage}%
	\def\@evenhead{\scriptsize\thepage \hfil \leftmark\mbox{}}%
	\def\@oddfoot{}%
	\def\@evenfoot{}}
\begin{document}

\title{Social optimum in Social Groups with Give-and-Take criterion}



\author{
	\IEEEauthorblockN{Saurabh Aggarwal, Joy Kuri}\\
	\IEEEauthorblockA{Department of Electronic Systems Engineering,\\ Indian Institute of Science, Bangalore, India,\\
		Email: saggarwal@cedt.iisc.ernet.in, kuri@cedt.iisc.ernet.in}
	\and	
	\IEEEauthorblockN{Rahul Vaze}\\
	\IEEEauthorblockA{School of Technology and Computer Science,\\Tata Institute of Fundamental Research, Mumbai, India
		\\ Email: vaze@tcs.tifr.res.in}
}
\maketitle
\begin{abstract}
	We consider a ``Social Group'' of networked nodes, seeking a ``universe'' of segments. Each node has subset of the universe, and access to an expensive resource for downloading data. Alternatively, nodes can also acquire the universe by exchanging segments among themselves, at low cost, using a local network interface. While local exchanges ensure minimum cost, ``free riders'' in the group can exploit the system. To prohibit free riding, we propose the ``Give-and-Take'' criterion, where exchange is allowed if each node has segments unavailable with the other. Under this criterion, we consider the problem of maximizing the aggregate cardinality of the nodes' segment sets. First, we present a randomized algorithm, whose analysis yields a lower bound on the expected aggregate cardinality, as well as an approximation ratio of 1/4 under some conditions. Four other algorithms are presented and analyzed. We identify conditions under which some of these algorithms are optimal
\end{abstract}
\bibliographystyle{IEEEtran}

\section{Introduction and Related work}
\label{sec:intro}  
We consider a ``Social Group''  \cite{social_group} of networked nodes, where each node is looking for a common set of data files/segments (henceforth refereed to as the ``universe'') \cite{dist_selfish_repl,social_similarity}. The group has access to a common resource for downloading segments which is high cost, for example cost can be delay, license fee, transmission power etc. On the other hand, the nodes can exchange data among themselves at much lower cost possibly because of physical proximity, mutual self-help cooperation etc.. 
Initially, each node has a subset of the \emph{universe}, and it is interested in acquiring all other pieces of the universe through the low cost local area network rather than the high cost common resource to reduce the overall download cost. Various popular applications based on this concept include web caching, content distribution networks, peer-to-peer networks, etc., with the common theme of leveraging low-cost data exchange instead of consuming expensive resources. More recently, device-to-device (D2D) communication in cellular wireless networks is another example of such a social group, where mobile phones can either directly talk to other mobile phones without going through the base station or help other mobiles in their communication.

%

Principle assumption in resource sharing is cooperation and willing participation of all nodes. In particular, a node should be willing to let other nodes obtain data  from it. Typically, \emph{selfish nodes} are reluctant to provide data/help to others; on the other hand, they are keen to grab as much as possible from other nodes. In a peer-to-peer networking context, such selfish nodes are termed \emph{free riders}. In a scenario with a large social group of nodes, it is very likely that nodes do not ``know'' one another, and hence do not trust each other. This lack of trust acts as a deterrent to sharing, to the detriment of the whole community.


For rational nodes, free riding is the dominant strategy. However, if every node engages in free riding, there will be a complete breakdown of sharing. To discourage free riding, various policies and methods have been proposed in the context to peer-to-peer networks \cite{freewhite,overfreeride,cheapfreeriding,effort_based,global_fair,free_riding,g2g_free}. Policy counterparts of these can be used to counter the free riding problem in social groups also. Often, however, 
these policies suggest and incentivize good behavior, but do not enforce it. Selfish users manage to find weaknesses that can be exploited. Various types of attacks (such as whitewashing, collusion, fake services, Sybil attack \cite{free_riding,sybil,sybil_p2p}) are possible in P2P networks \cite{free_riding}. Similar attacks can be launched by nodes in social groups as well.

To prohibit free riding, we propose the \emph{Give-and-Take (GT) criterion} for segment exchange. 
Essentially, the idea is that a node $A$ can download segments from another node $B$ if and only if $A$ offers at least one new segment to $B$. Thus, a notion of \emph{fairness} is built into \gt --- $A$ cannot get anything from $B$ unless she offers $B$ something in return. Each side has to contribute at least one segment that the other side does not have. In this paper, we study the special case of ``Full Exchange,'' in which nodes exhibit altruistic behaviour --- $A$ is willing to provide \emph{all} segments it possesses even if she gets just \emph{one} segment from $B$. Because of this, after an exchange, both $A$ and $B$ will possess the \emph{union} of their individual segment sets before exchange. 

Incorporating \gt in our model, we study the problem of scheduling exchanges complying with \gt, so that the aggregate number of data segments available with all the nodes can be maximized. The motivation is 
\emph{reduced cost of downloading}---if the aggregate number of data segments acquired through low cost local exchanges is maximized, then the total number of segments to be downloaded using the high-cost resource is least.

Given data segment sets at each node, there may be several pairs of nodes that satisfy \gt and different choices for data exchange lead to different final aggregate cardinalities. Identifying the optimal schedule of data exchange that maximizes the aggregate cardinality of data segments available at all nodes is a combinatorial problem with exponential complexity ($\mathcal{NP}-$hard). The complexity of the problem can be accessed from a simple example illustrated in  Fig. \ref{fig:scheduling}, where there are only $4$ nodes and the universe size is $5$, and there are exponentially many data exchange schedules. Finding the one with optimal final aggregate cardinality is a challenging task.

This paper makes the following contributions.
\vspace{-0.05in} 
\begin{itemize}
\item We propose the novel {\it GT criterion}, which prohibits free riding in social groups. This, we believe, is a new concept that will help understand the fundamental principles of data sharing in local networks under fair exchange models.

\item We propose a Randomized algorithm, that works in phases, where in each phase it randomly picks a pair of nodes for exchange.  If the chosen pair satisfies \gt, then the exchange happens, otherwise the two nodes are kept aside and a new pair is chosen randomly again. The phase ends when all pairs are exhausted. We show that the randomized algorithm is optimal for maximizing the aggregate cardinality when the number of nodes is very large. We also show that the aggregate cardinality obtained by the randomized algorithm is at least $1/4$ times of the optimal aggregate cardinality under some conditions, thus providing a $4$ approximation.

\item We propose a Greedy-Links algorithm that, at each step, pairs those nodes so that the number of possible exchanges in the next step is maximized. We show that this algorithm is optimal for small number of nodes, e.g. $4$.

\item We propose shifted round-robin algorithm called the Polygon algorithm, that in each phase 
exchanges segments between neighboring nodes (certain predefined order of nodes), and then for the next phase, repeats the process after circularly left shifting the order of nodes.  The polygon algorithm is shown to be optimal when each node has at least one unique segment.

\item We propose two more algorithms, the first (called the Greedy-Incremental Algorithm) exchanges nodes that maximize the aggregate cardinality in next phase, while the second (called the Rarest First Algorithm) chooses that pair for exchange such that subset of segments with the minimum number of nodes is maximized. 

\item By means of extensive simulation results, we show that the Greedy links algorithm performs the best, since it tries to maximize the potential for nodes to exchange their pieces. Rarest first algorithm comes a close second to  the Greedy links algorithm, while the Randomized algorithm on an average outperforms Greedy incremental algorithm and Polygon algorithm. 
\end{itemize}


\section{System Model and Problem Formulation}
\label{sec:sysmodel}
We consider a set of nodes, $\mathcal{M}=\left\{1,2,\cdots, m\right\}$ and a universe $\mathcal{N}=\left\{1,2,\cdots, n\right\}$ of segments. Each node $i\in\mathcal{M}$ has an initial collection of segments $O_i\subset\mathcal{N}$. 

\emph{Give-and-Take (GT) criterion:} Two nodes $i, \ j\in \mathcal{M}$ with segment sets $X_i$ and $X_j$, respectively, can exchange segments 
if and only if $X_i\cap X_j^c \neq\emptyset \quad \text{and} \quad X_i^c\cap X_j \neq\emptyset$, i.e. node $j$ has at least one segment which is unavailable with node $i$ and vice versa. 
After exchange, both nodes have the segment set $X_i\cup X_j$. 



We consider the set of nodes $\mathcal{M}$ as the vertices in an undirected graph $G$, where an edge or {\it link} exists between two vertices $i,j\in\mathcal{M}$ if and only if they satisfy \gt. 
We denote the link between nodes $i$ and $j$ by the unordered 2-tuple $(i,j)$. By activating/pairing an edge/link we mean that the two nodes on that edge exchange their segments.

Given an initial collection of segment sets $O_i$, a schedule $S$ is a repeated activation of links (exchange of nodes) in graph $G$, in a given order. Note that since one link activation changes the graph $G$, we denote the dynamic graph $G$ as $G_{S}(r)$, where subscript $S$ stands for the schedule of node exchanges/link activations over graph $G$ and $r$ stands for the $r^{th}$ step of schedule $S$. Thus, $G_{S}(r)$ is a graph with vertex set $\mathcal{M}$, and where an edge between two vertices $i,j\in\mathcal{M}$ exists if they satisfy \gt at the $r^{th}$ iteration of schedule $S$.

For example, in $G_B(3)$ of Fig. \ref{fig:schedule_B}, i.e. in $3^{rd}$ step of schedule $B$, links exist between node $1$ and $4$, node $3$ and $4$ and node $2$ and $4$.  If link $(2,4)$ is activated, then node $2$ gets segment $5$ from node $4$, and node $4$ gets segments $2,3,4$ from node $2$. Thus, for $G_B(4)$ the graph is completely disconnected since no two nodes satisfy \gt.

We represent schedule $S$ as a row vector of links in the order of activation and $\left|S\right|$ denotes the number of link activations in the schedule $S$. We denote the set of available links after $r$ activations of a schedule $S$ by $\mathcal{L}(S,r)$. 

Let $O_i(S,r)$ denote the set of segments with node $i\in\mathcal{M}$ after the first $r$ activations of schedule $S$. In this context, $O_i(S,0)=O_i$ denotes the initial collection of segments with node $i\in\mathcal{M}$. Similarly, $\mathcal{L}([\cdot],0)$ denotes the set of links that exist between the initial segment sets.

A \emph{Maximal Schedule} is a schedule which cannot be extended by any other link activations, i.e. at the end of the maximal schedule \gt is not satisfied between any pair of nodes. Let $\mathcal{X}$ denote the set of all \emph{Maximal Schedules}.

For any schedule $S$, let $\left[S,\left(i,j\right)\right]$ denotes the schedule which follows all link activations in $S$, followed by activation of the link between node $i$ and $j$ such that $\left(i,j\right)\in\mathcal{L}(S,\left|S\right|)$.
\begin{figure}[h]
	\subfigure[Evolution of nodes' segment set under schedule $S_A$; Aggrgate cardinality under schedule $S_A$ is 20]{
		\label{fig:schedule_A}
		\includegraphics[width=\linewidth]{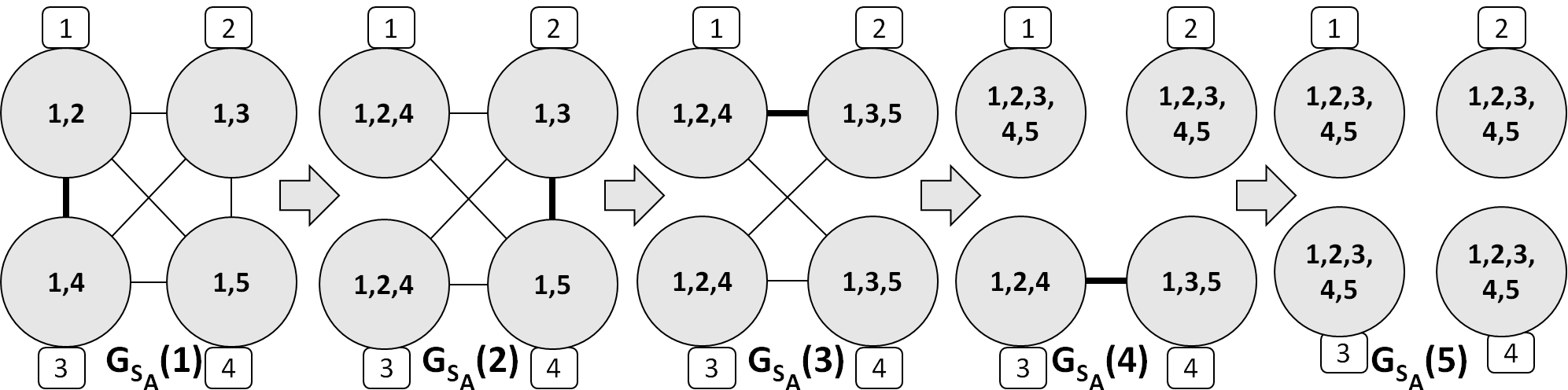}}
	\subfigure[Evolution of nodes' segment set under schedule $S_B$; Aggrgate cardinality under schedule $S_B$ is 17]{
		\label{fig:schedule_B}
		\includegraphics[width=\linewidth]{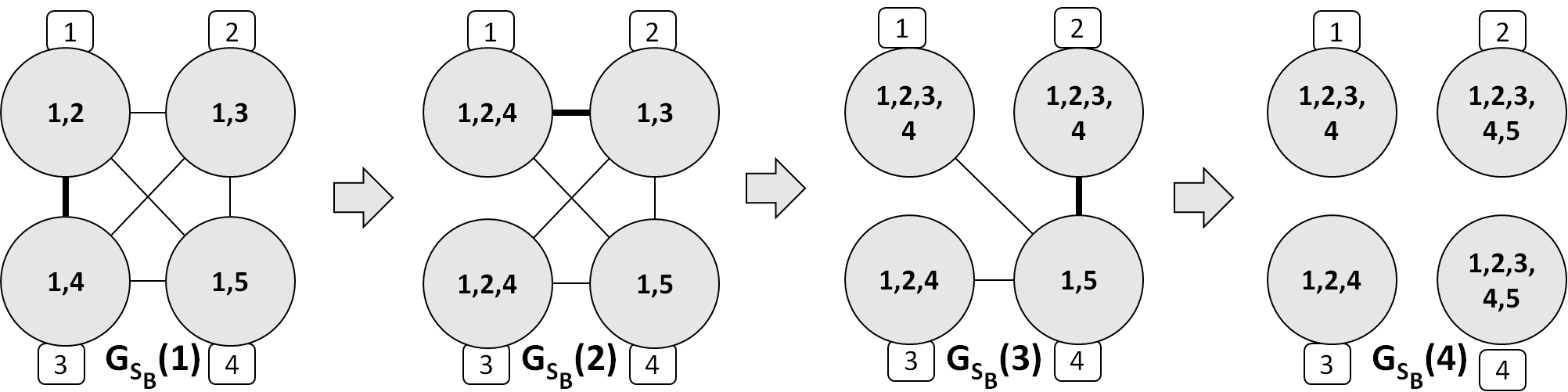}}
	\caption{Evolution of nodes' segment set under two different schedules. Nodes are numbered in the rounded squares. Node set is $\mathcal{M}=\{1,2,3,4\}$ and Universe is $\mathcal{N}=\{1,2,3,4,5\}$.}
	\label{fig:scheduling}
\end{figure}

In Fig. \ref{fig:schedule_B}, schedule $S_B$ is being followed for link activation. We activate link $(1,3)$ and both the nodes $1$ and $3$ will have segment sets $O_1(S_B,1)=O_3(S_B,1)=O_1\cup O_3$. We can activate either of links in $\mathcal{L}(S_B,1)=\left\{(1,2),(1,4),(2,3),(2,4),(3,4)\right\}$. Now, activating link $(1,2)$ followed by link $(2,4)$, leads us to graph $G_{S_B}(4)$ where no links exist between any of the nodes. Hence, $S_B$ is a maximal schedule which will be represented as $S_B=\left[(1,3),(1,2),(2,4)\right]$.

\subsection*{Assumptions}
\label{subsec:assumption}
\begin{enumerate}[{A}1:]
	\item We assume that the cost of exchange between nodes is zero or negligible as compared to the cost of downloading the segment from outside the set $\mathcal{M}$ similar to \cite{social_similarity}.
	\item $O_i\neq\emptyset, \text{ and }O_i\neq \mathcal{N}, \quad \forall\ i\in \mathcal{M}$.
\end{enumerate}
\subsection*{Computing the Social optimum}
\label{subsec:objective}
The aggregate cardinality on completion of a maximal schedule $S$ 
is denoted by $\alpha(S)=\sum_{i\in \mathcal{M}}\left|O_i(S,\left|S\right|)\right|$.
Our objective is to find a schedule of link activations $S^*$, such that
the total number of segments available with the nodes, \textit{i.e.,}
the aggregate cardinality on completion of $S^*$, is \emph{maximized}. Formally, we 
want to find
\begin{eqnarray}\label{eq:probform}
\max \quad \alpha(S). \\ \nonumber
\textrm{Subject to: } S \in \mathcal{X}
\end{eqnarray}

Note that a trivial upper bound to $\alpha(S) \le mn$ for even-$m$ and $\alpha(S) \le mn-1$ for odd-$m$, that will be useful later in the paper.

To understand the implications of solving (\ref{eq:probform}), note that to begin with, the number of ``missing'' pieces in the network is $(mn-\sum_{i\in\mathcal{M}} |O_i|)$. 
After the completion of optimal schedule $S^*$,
\emph{at most} $(mn-\alpha(S^*))$ segments will need to be fetched 
using the expensive resource, thus solving (\ref{eq:probform}) minimizes the social cost.\footnote{There is a possibility for nodes to download few segments over the high cost network after $S^*$ and again use the low cost network for exchanges among themselves. This process might be repeated recursively.}

The optimal schedule (solution)  to (\ref{eq:probform}) is the social optimum, however, solving 
(\ref{eq:probform}) has exponential complexity, that makes it computationally infeasible. Even for simple examples, as in Fig. \ref{fig:scheduling}, there are many possible schedules, and different schedules lead to different aggregate cardinalities, e.g.  
with schedule $S_A$ every node can get the universe while with schedule $S_B$ only two nodes can get the universe. In the rest of the paper, we propose several polynomial-time algorithms to solve (\ref{eq:probform}) and bound their performance analytically and through extensive simulations.

Next, we derive an important property of the \gt based segment exchanges in Lemma \ref{lem:at-least-2}.

\begin{lemma}
	\label{lem:at-least-2}
For any maximal schedule $S$ and $O_i\subsetneq \bigcup_{j\in\mathcal{M}} O_j\, \forall i, j\in\mathcal{M}$, at least two nodes will have $\bigcup_{i\in\mathcal{M}} O_i$ towards the end of schedule $S$.
\end{lemma}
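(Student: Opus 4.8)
The plan is to build the argument on two simple invariants of \gt exchanges. Write $U := \bigcup_{j\in\mathcal{M}}O_j$ for the ``grand union'' of all segments. First, $U$ is preserved by every link activation: when nodes $i$ and $j$ are paired at step $r$, both acquire $O_i(S,r-1)\cup O_j(S,r-1)$, so the family of segment sets changes only by replacing two of its members with their union; hence $\bigcup_{p\in\mathcal{M}}O_p(S,r)=\bigcup_{p\in\mathcal{M}}O_p(S,r-1)$, and inductively $\bigcup_{p\in\mathcal{M}}O_p(S,r)=U$ for every $r$. Second, each node's segment set is non-decreasing along the schedule, since $O_i(S,r-1)\cup O_j(S,r-1)$ contains both $O_i(S,r-1)$ and $O_j(S,r-1)$, while a node not involved in step $r$ keeps its set. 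From these I derive an ``absorbing'' fact that I will use: once a node holds $U$ it can never again satisfy \gt, because for any other node $q$ one has $U^c\cap O_q(S,r)\subseteq U^c\cap U=\emptyset$; so a node that ever holds $U$ does no further exchanges and retains $U$ to the end.

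Next I would extract the structural consequence of maximality. At the end of $S$ no pair satisfies \gt, which by the definition of the criterion means that for every pair $i,j$ either $O_i(S,|S|)\subseteq O_j(S,|S|)$ or $O_j(S,|S|)\subseteq O_i(S,|S|)$; i.e.\ the finite family $\{O_i(S,|S|)\}_{i\in\mathcal{M}}$ is pairwise comparable under inclusion. Such a family has a maximal element, which, being comparable to every other member, contains them all; call it $O_k(S,|S|)$. Then $O_k(S,|S|)\supseteq\bigcup_{i\in\mathcal{M}}O_i(S,|S|)=U$, and since it is also a subset of $U$ we get $O_k(S,|S|)=U$. So at least one node finishes with the full set $U$.

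The crux is then to produce a second such node. By hypothesis $O_k=O_k(S,0)\subsetneq U$, so node $k$ must have gained segments, i.e.\ it participates in at least one exchange; let step $r$ be the \emph{last} exchange in which node $k$ takes part, with partner $\ell\neq k$. Since $k$ is not involved in any activation after step $r$, its set is unchanged from step $r$ onward, so $O_k(S,r)=O_k(S,|S|)=U$. But $O_k(S,r)=O_k(S,r-1)\cup O_\ell(S,r-1)=O_\ell(S,r)$, so node $\ell$ also holds $U$ immediately after step $r$; by the absorbing observation, $\ell$ does no further exchanges, whence $O_\ell(S,|S|)=U$ as well. Thus both $k$ and $\ell$, which are distinct, hold $U$ at the end of $S$, proving the lemma.

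I expect the only delicate point to be the handling of node $k$'s ``last exchange'': one must check that it is well defined --- which it is, since $O_k\subsetneq U$ forces at least one exchange --- and that this exchange in fact delivers all of $U$ to both endpoints, which follows purely from monotonicity of the per-node sets together with the invariance of $U$. Everything else is direct bookkeeping with the two invariants, so no heavy machinery or case analysis should be needed.
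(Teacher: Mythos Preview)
Your proof is correct and follows essentially the same line as the paper's: maximality forces the terminal segment sets to be totally ordered by inclusion, so the top set equals $U$, and the node holding it must have picked it up in a last exchange whose partner therefore also holds $U$. Your explicit ``last exchange plus absorbing'' argument is a cleaner spelling-out of what the paper compresses into the single remark that ``each link activation produces two nodes with the same subset of segments.''
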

\begin{proof}
	We reorder the nodes in ascending order of cardinalities at the end 
of $S$ and index them as $i_1, \dots, i_m$: $\left|O_{i_1}(S,\left|S\right|)\right|\leq\left|O_{i_2}(S,\left|S\right|)\right|\cdots\left|O_{i_{m-1}}(S,\left|S\right|)\right|\leq\left|O_{i_m}(S,\left|S\right|)\right|$.
	By \gt, if a link does not exist between the nodes $i$ and $j$, then $O_i\subseteq O_j$ or $O_j\subseteq O_i$. Therefore,
	$$O_{i_1}(S,\left|S\right|)\subseteq O_{i_2}(S,\left|S\right|)\cdots O_{i_{m-1}}(S,\left|S\right|)\subseteq O_{i_m}(S,\left|S\right|).$$
	Since the cardinalities of nodes $i_{m-1}$ and $i_m$ are the
largest and second largest,  and each link activation produces two nodes with same subset of segments, therefore $O_{i_{m-1}}(S,\left|S\right|)= O_{i_m}(S,\left|S\right|)$.
Since there is no link between $i_{m-1}, i_m$ and any other node, 
the segment sets of nodes $i_{m-1}$ and $i_m$ are supersets of 
segments available with all other nodes, therefore $$O_{i_{m-1}}(S,\left|S\right|)= O_{i_m}(S,\left|S\right|)=\bigcup_{i\in\mathcal{M}} O_i\left(S,\left|S\right|\right)=\bigcup_{i\in\mathcal{M}} O_i.$$ 
	Hence, at least two nodes will have $\cup_{i\in\mathcal{M}} O_i$ at 
the end of any maximal schedule $S$.
\end{proof}
\begin{cor}
	For any $S\in\mathcal{X}$, 
	$$\alpha(S)\geq 2\left|\bigcup_{i\in\mathcal{M}} O_i\right|+\sum_{i\in\mathcal{M}\setminus\left\{i_{m-1},i_m\right\}} \left|O_i\right|,$$
	where $i_{m-1}$ and $i_m$ are two nodes having the maximum cardinalities at the end of schedule $S$.
\end{cor}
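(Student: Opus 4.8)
The plan is to read the bound off directly from Lemma~\ref{lem:at-least-2} together with the monotonicity of segment accumulation under \gt exchanges. First I would invoke Lemma~\ref{lem:at-least-2}: using the same reindexing $i_1,\dots,i_m$ of the nodes in ascending order of their final cardinalities that appears in the lemma's proof, the two largest, $i_{m-1}$ and $i_m$, satisfy $O_{i_{m-1}}(S,|S|)=O_{i_m}(S,|S|)=\bigcup_{i\in\mathcal{M}}O_i$. Hence these two nodes contribute exactly $2\Bigl|\bigcup_{i\in\mathcal{M}}O_i\Bigr|$ to $\alpha(S)=\sum_{i\in\mathcal{M}}|O_i(S,|S|)|$.

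Next I would handle the remaining $m-2$ nodes. The elementary observation needed is that every link activation replaces the segment sets of its two endpoints by their union, so $O_i(S,r-1)\subseteq O_i(S,r)$ for every node $i$ and every step $r$; in particular $O_i\subseteq O_i(S,|S|)$, whence $|O_i(S,|S|)|\ge|O_i|$ for all $i\in\mathcal{M}$. Splitting $\alpha(S)$ into the two ``full'' nodes and the rest, and applying this termwise to each $i\in\mathcal{M}\setminus\{i_{m-1},i_m\}$, gives
\begin{align*}
\alpha(S)&=|O_{i_{m-1}}(S,|S|)|+|O_{i_m}(S,|S|)|+\sum_{i\in\mathcal{M}\setminus\{i_{m-1},i_m\}}|O_i(S,|S|)|\\
&\ge 2\Bigl|\bigcup_{i\in\mathcal{M}}O_i\Bigr|+\sum_{i\in\mathcal{M}\setminus\{i_{m-1},i_m\}}|O_i|,
\end{align*}
which is the claimed inequality.

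There is essentially no difficult step here; the only points requiring care are (i) that the two nodes holding the full union are precisely the two excluded from the residual sum, which is guaranteed by taking $i_{m-1},i_m$ to be the indices of maximum final cardinality in agreement with the ordering used for Lemma~\ref{lem:at-least-2}, and (ii) that Lemma~\ref{lem:at-least-2} is being applied under its hypothesis $O_i\subsetneq\bigcup_{j\in\mathcal{M}}O_j$ for all $i$, which is carried over to the corollary (in the degenerate case that some node already owns the whole available union, that node alone already accounts for one copy of $\bigl|\bigcup_{i\in\mathcal{M}}O_i\bigr|$ and a short separate argument recovers the second copy). Everything else is just the monotonicity bound $|O_i(S,|S|)|\ge|O_i|$ applied coordinatewise.
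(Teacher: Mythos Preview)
Your proof is correct and follows exactly the approach the paper intends: the corollary is stated without proof immediately after Lemma~\ref{lem:at-least-2}, and your argument---invoke the lemma for the two top nodes and use the obvious monotonicity $O_i\subseteq O_i(S,|S|)$ for the remaining nodes---is precisely the implied derivation. There is nothing to add.
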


Hence, Lemma \ref{lem:at-least-2}, shows that for any maximal schedule at least two nodes with get the whole {\it realized} universe, where by realized we mean the union of segments available with all nodes. In Lemma \ref{p_mnk}, we derive the probability that the realized universe is equal to the universe, when each node gets $k$-segments uniformly randomly drawn from universe ${\cal N}$.
\begin{figure*}
	\hrulefill\\
	$p^{m,n,k}$ is defined here for the sake of convenience; it is used in Lemma~\ref{p_mnk}.
	\begin{equation}
	\label{eq:pmnk}
	p^{m,n,k}={n\choose k}^{-m}\sum_{\begin{array}{c}
		\sum_{j=1}^{m-1} k_j = mk-n\\
		(k_1,k_2\cdots k_{m-1})
		\end{array}}{\prod_{j=1}^m {{(j-1)k-\sum_{i=1}^{j-2}k_i}\choose k_{j-1}}{{n-((j-1)k-\sum_{i=1}^{j-2}k_i)}\choose k-k_{j-1}}}. 
	\end{equation}
	\hrulefill
\end{figure*}
\begin{lemma}
	\label{p_mnk}
	Let each of the $m$ nodes select $k$ segments uniformly at random from a universe $\mathcal{N}$.  Then
	\begin{eqnarray}
	&&\mathbb{P}\left(\bigcup_{j=1}^m O_j = \mathcal{N}\right)=\begin{cases}
	p^{m,n,k} \quad \mathrm{if} \, mk\geq n, \\
	0 \quad\quad\quad \mathrm{otherwise.}
	\end{cases} \nonumber
	\end{eqnarray}
\end{lemma}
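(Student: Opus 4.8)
The plan is to reveal the nodes' random sets one at a time, in a fixed order $1,\dots,m$, and to track only the \emph{size} of the running union; this loses nothing because the joint law of $(O_1,\dots,O_m)$ is a product of i.i.d.\ uniform draws, so any order of revelation gives the same answer. For $j=2,\dots,m$ let $k_{j-1}$ be the number of segments of $O_j$ that already lie in $O_1\cup\cdots\cup O_{j-1}$, set $k_0:=0$, and write $U_j:=\bigl|O_1\cup\cdots\cup O_j\bigr|$ with $U_0:=0$. A one-line induction shows $U_j=jk-\sum_{i=1}^{j-1}k_i$: indeed $U_1=k$, and node $j$ adds exactly $k-k_{j-1}$ previously unseen segments, so $U_j=U_{j-1}+(k-k_{j-1})$. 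Hence $\bigcup_j O_j=\mathcal N$ holds iff $U_m=n$ iff $\sum_{i=1}^{m-1}k_i=mk-n$; since each $k_i\ge 0$ this is impossible when $mk<n$, which settles the second case of the lemma.

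The crux is to evaluate, for a fixed profile $(k_1,\dots,k_{m-1})$ with $\sum_i k_i=mk-n$, the probability that the draws realise it. Condition on $O_1,\dots,O_{j-1}$: on the event that the first $j-2$ overlaps equal $k_1,\dots,k_{j-2}$, the set $O_1\cup\cdots\cup O_{j-1}$ is some \emph{fixed} subset of $\mathcal N$ of size $U_{j-1}=(j-1)k-\sum_{i=1}^{j-2}k_i$, by the induction above. Since $O_j$ is uniform over the $\binom nk$ size-$k$ subsets of $\mathcal N$ and independent of $O_1,\dots,O_{j-1}$, the number of choices of $O_j$ with exactly $k_{j-1}$ elements inside that fixed set is $\binom{U_{j-1}}{k_{j-1}}\binom{n-U_{j-1}}{k-k_{j-1}}$. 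The point is that this count depends on the past only through $U_{j-1}$, a deterministic function of $k_1,\dots,k_{j-2}$; so the conditional probability that node $j$ has overlap $k_{j-1}$, given the earlier overlaps, is $\binom{U_{j-1}}{k_{j-1}}\binom{n-U_{j-1}}{k-k_{j-1}}\big/\binom nk$. Multiplying these over $j=1,\dots,m$ (the $j=1$ factor being $\binom00\binom nk/\binom nk=1$, which matches the $j=1$ term of $(\ref{eq:pmnk})$) gives the probability of the profile; the events attached to distinct profiles are pairwise disjoint and together form $\{U_m=n\}$, so summing over all profiles with $\sum_i k_i=mk-n$ yields exactly $p^{m,n,k}$. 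Tuples that are not actually attainable --- where some $k_{j-1}$ exceeds the available ``old'' segments $U_{j-1}$ or the available ``new'' segments $n-U_{j-1}$, or where an intermediate $U_{j-1}$ would overshoot $n$ --- contribute $0$ automatically since one of the two binomial coefficients vanishes, so it is harmless for the outer sum in $(\ref{eq:pmnk})$ to range over all integer tuples with the prescribed sum.

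I expect the only real difficulty to be bookkeeping: checking that the conditional probability at step $j$ truly depends on the past only through the union \emph{size}, and aligning the index shifts in $(\ref{eq:pmnk})$ --- the $k_{j-1}$ versus $k_j$, the summation range $\sum_{i=1}^{j-2}$, and the off-by-one between $U_{j-1}$ and $U_j$ --- with the sequential decomposition. (An inclusion--exclusion over uncovered elements gives the neater expression $\sum_{t=0}^{n}(-1)^t\binom nt\bigl(\binom{n-t}{k}\big/\binom nk\bigr)^m$ for the same probability, but since the lemma is stated in terms of the specific sum $(\ref{eq:pmnk})$, the sequential argument is the one to carry through.)
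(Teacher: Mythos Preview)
Your argument is correct and is essentially the same as the paper's: both reveal $O_1,\dots,O_m$ sequentially, record the overlap sizes $k_{j-1}=|O_j\cap(O_1\cup\cdots\cup O_{j-1})|$, and count the choices for $O_j$ as $\binom{U_{j-1}}{k_{j-1}}\binom{n-U_{j-1}}{k-k_{j-1}}$ with $U_{j-1}=(j-1)k-\sum_{i\le j-2}k_i$, then sum over profiles with $\sum k_i=mk-n$. The only cosmetic difference is that you phrase the step as a conditional probability while the paper phrases it as a raw count over $\binom nk^m$; your treatment is in fact slightly more careful in justifying that the conditional law depends on the past only through $U_{j-1}$ and in noting that infeasible tuples contribute zero.
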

\begin{proof}
	Evidently, if $mk<n$, then
	$\mathbb{P}\left(\bigcup_{j\in \mathcal{M}} O_j= \mathcal{N}\right)=0$.
The interesting case is when $mk\geq n$, for which   
	$\mathbb{P}\left(\bigcup_{j\in \mathcal{M}} O_j= \mathcal{N}\right)=\dfrac{\text{No of favourable cases}}{\text{Total number of possibilities}}$.
	The number of $k$ element subsets of $\mathcal{N} ={n\choose k}$. Total number of possibilities$ = {n\choose k}^m$
	
	\emph{Calculating the number of favourable cases:} 
	If $mk > n$, some elements must appear in the segment sets of several nodes.
	The total number of repeated elements is given by $(mk-n)$. Let us assume	$\left|O_2 \bigcap O_1\right|= k_1$, $\left|O_3 \bigcap(O_1\bigcup O_2)\right|=k_2$, 
	$\dots$, $\left|O_m \bigcap \left(\bigcup_{i=1}^{m-1} O_i \right)\right| =k_{m-1}$.
	Therefore, $\sum_{j=1}^{m-1} k_j=mk-n$.
	The number of ways of choosing $O_1 = {n\choose k}$.
	$O_2$ has $k_1$ elements in common with $O_1$, so $k_1$ elements should be chosen from $O_1$ (consisting of $k$ elements) and the remaining $(k-k_1)$ elements should be chosen from the 
	remaining $(n-k)$ elements. 
	This gives the number of ways of choosing $O_2$ as ${k\choose k_1}{n-k\choose k-k_1}$.
	
	Similarly, $O_3$ has $k_2$ elements in common with $O_1$ and $O_2$, so $k_2$ elements should be chosen from $O_1 \cup O_2$ (consisting of $(2k-k_1)$ elements) and the remaining 
	$(k-k_2)$ elements from $(n-(2k-k_1))$ elements.
	Thus, the number of ways of choosing $O_3$ is ${2k-k_1\choose k_2}{n-(2k-k_1)\choose k-k_2}$.
	
	
	In the same way, the number of ways of choosing $O_j$ is
	$${{(j-1)k-\sum_{i=1}^{j-2}k_i}\choose k_{j-1}}{{n-((j-1)k-\sum_{i=1}^{j-2}k_i)}\choose k-k_{j-1}}.$$
	
	Hence, the total number of ways of choosing $O_1,O_2\cdots O_j\cdots O_m$  for a given $(k_1,k_2\cdots k_{j-1}\cdots k_{m-1})$ is 
	$$\prod_{j=1}^m {{(j-1)k-\sum_{i=1}^{j-2}k_i}\choose k_{j-1}}{{n-((j-1)k-\sum_{i=1}^{j-2}k_i)}\choose k-k_{j-1}}.$$
	
	To get the total number of favorable cases, we need to sum over all possible values of $(k_1,k_2\cdots k_{j-1}\cdots k_{m-1})$ for which $\sum_{j=1}^{m-1} k_j=(mk-n)$. Therefore,
	$$\mathbb{P}\left(\bigcup_{j=1}^m O_j = \mathcal{N}\right)=\begin{cases}
	p^{m,n,k} \quad \mathrm{if} \, mk\geq n, \\
	0 \quad\quad\quad \mathrm{Otherwise}.
	\end{cases} $$
\end{proof}
\begin{cor}
	The probability that at least one node will have the universe $\mathcal{N}$ at
	the end of any schedule $S\in\mathcal{X}$ is $p^{m,n,k}$.
\end{cor}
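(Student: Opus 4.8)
The plan is to reduce the corollary to Lemma~\ref{p_mnk} via the structural fact of Lemma~\ref{lem:at-least-2}. The key preliminary observation is a monotonicity/closure property of \gt-exchanges: along any schedule, each activation replaces two segment sets by their union, so $O_i(S,r)\subseteq\bigcup_{j\in\mathcal{M}} O_j$ for every node $i$ and every step $r$, and the sets $O_i(S,r)$ are nondecreasing in $r$. In particular, no node can ever hold $\mathcal{N}$ at the end of a schedule unless the ``realized universe'' $\bigcup_{j\in\mathcal{M}} O_j$ already equals $\mathcal{N}$.

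Next I would establish the converse direction. Suppose $\bigcup_{j\in\mathcal{M}} O_j=\mathcal{N}$. By Assumption A2 we have $O_i\neq\mathcal{N}$ for every $i$, hence $O_i\subsetneq\bigcup_{j\in\mathcal{M}} O_j$ for all $i$, so the hypothesis of Lemma~\ref{lem:at-least-2} is satisfied. That lemma then guarantees that at least two nodes hold $\bigcup_{j\in\mathcal{M}} O_j=\mathcal{N}$ at the end of any $S\in\mathcal{X}$; in particular at least one does. Combining this with the previous paragraph, the event ``at least one node has $\mathcal{N}$ at the end of $S$'' coincides exactly with the event $\{\bigcup_{j\in\mathcal{M}} O_j=\mathcal{N}\}$, and, crucially, this equivalence is independent of which maximal schedule $S$ is chosen, since both the monotonicity remark and Lemma~\ref{lem:at-least-2} apply to every $S\in\mathcal{X}$.

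Finally I would invoke Lemma~\ref{p_mnk}: when each of the $m$ nodes draws $k$ segments uniformly at random from $\mathcal{N}$, $\mathbb{P}\!\left(\bigcup_{j\in\mathcal{M}} O_j=\mathcal{N}\right)=p^{m,n,k}$ (equal to $0$ when $mk<n$, consistent with the convention fixed for $p^{m,n,k}$). By the equivalence of events just proved, the probability that at least one node ends with the universe is therefore $p^{m,n,k}$, which is the claim.

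I do not expect a genuine obstacle in this argument; it is essentially a one-line corollary of Lemmas~\ref{lem:at-least-2} and~\ref{p_mnk}. The only point deserving a sentence of care is making the schedule-independence explicit, so that the statement ``for the end of any $S\in\mathcal{X}$'' is justified uniformly rather than for some particular maximal schedule.
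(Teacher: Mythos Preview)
Your proposal is correct and follows exactly the approach of the paper, which simply states that the result follows from Lemma~\ref{lem:at-least-2} and Lemma~\ref{p_mnk}. You have merely spelled out the details (monotonicity of segment sets, the event equivalence, and schedule-independence) that the paper leaves implicit in its one-line proof.
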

\begin{proof}
	This follows from Lemma~\ref{lem:at-least-2} and Lemma~\ref{p_mnk}. 
\end{proof}

\section{Algorithms}
\label{sec:algorithms}
\subsection{Randomized Algorithm}
\label{subsec:rand_heur}
We first present a randomized algorithm to solve (\ref{eq:probform}).
The randomized algorithm works in phases. 
In each phase, two nodes are uniformly randomly chosen for exchange. If they have an edge between them in the corresponding graph, i.e. satisfy \gt, then they exchange their segments, otherwise they are kept aside without any exchange. Once a pair is chosen in a phase, with or without actual exchange, it takes no further part in that phase. The phase ends when choice of all pairs is exhausted, and then the new phase starts all over again, forgetting what happened in earlier phases. The algorithm terminates when there are no more links available in the graph after the end of any phase. 

\begin{algorithm}[h]
	\caption{Randomized Algorithm}
	\label{alg:rand_heur}
	\begin{algorithmic}[1]
		\State $r:=0, p:=1$ and $S_{rand}:=\left[\cdot\right]$
		\While {$\mathcal{L}(S_{rand},r)\neq \emptyset$}
			\State $\mathcal{M}_{pick}:=\mathcal{M}$
			\While {$\mathcal{M}_{pick}\neq \emptyset$} \Comment{phase $p$ begins}
			\State Select nodes $i$ and $j$ at random from $\mathcal{M}_{pick}$
			\If {$\left(i,j\right)\in \mathcal{L}(S_{rand},r)$}
				\State Activate link between nodes $i$ and $j$
				\State $S_{rand}\gets\left[S_{rand},(i,j)\right]$
				\State $r\gets r+1$
			\EndIf
			\State $\mathcal{M}_{pick}\gets \mathcal{M}_{pick}\backslash\left\{i,j\right\}$
			\EndWhile \Comment{phase $p$ ends}
			\State $p\gets p+1$
		\EndWhile
	\end{algorithmic}
\end{algorithm}

\begin{thm} If initial segment sets ($O_i$'s) are chosen uniformly at random from $\mathcal{N}$, with $\left|O_i\right|=k$, $1\leq k\leq n-1,\, \forall\, i\in\mathcal{M}$, 
\begin{enumerate}
\item Then the randomized algorithm is asymptotically optimal in $m$, i.e. for large $m$, $E(\alpha(S_{rand})) = nm$, where $nm$ is the upper bound on  (\ref{eq:probform}). 
\item The randomizeed algorithm is a $4$ approximation to (\ref{eq:probform}), if 
$k\in \left[\min\left(\dfrac{n}{\log_2 m},\dfrac{n}{4}\right),n-1\right]$. 
\end{enumerate}
\label{thm:rand_bound}
\end{thm}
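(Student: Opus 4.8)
The plan is to handle the two parts separately, both resting on the observation (Corollary to Lemma~\ref{lem:at-least-2}) that at least two nodes end any maximal schedule holding the realized universe $\bigcup_{i} O_i$, together with Lemma~\ref{p_mnk} which quantifies the probability that the realized universe equals the full universe $\mathcal{N}$.

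For part~(1), the goal is $E(\alpha(S_{rand})) \to nm$ as $m \to \infty$. First I would argue that for fixed $k$ with $1 \le k \le n-1$ (so by assumption A2 every node is a nontrivial proper subset), as $m$ grows the probability $p^{m,n,k}$ that $\bigcup_i O_i = \mathcal{N}$ tends to $1$: this is essentially a coupon-collector statement --- with $m$ independent draws of $k$-subsets, the chance some segment of $\mathcal{N}$ is missed is at most $n\,(1-k/n)^m \to 0$. Conditioned on $\bigcup_i O_i = \mathcal{N}$, I need to show the randomized algorithm actually delivers $\mathcal{N}$ to \emph{every} node with probability approaching $1$, not just to the two guaranteed by Lemma~\ref{lem:at-least-2}. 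The clean way: show that in a single phase, as long as there exists a node not yet holding $\mathcal{N}$ and a node holding $\mathcal{N}$, the pair consisting of such a "full" node and such a "deficient" node satisfies \gt (the full node offers the deficient node a missing segment, and the deficient node --- being a proper subset of $\mathcal{N}$ by construction and not equal to the full node's set --- offers something back), and the random pairing activates such a beneficial link with probability bounded below. Since once $\bigcup_i O_i = \mathcal{N}$ there are (by the Corollary) at least two full nodes from the outset or they appear quickly, a union bound over the at most $m$ deficient nodes and a bound on the number of phases (each phase is $O(m)$ picks, and the algorithm terminates in polynomially many phases) shows $P(\text{every node gets } \mathcal{N}) \to 1$. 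Then $E(\alpha(S_{rand})) \ge nm \cdot P(\text{all full}) \to nm$, and since $nm$ is the stated upper bound, equality holds asymptotically. The main obstacle here is making the "beneficial link gets activated often enough" argument rigorous across phases --- one must rule out pathological orderings where deficient nodes keep getting paired with each other; bounding the per-phase progress probability away from zero uniformly is the crux.

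For part~(2), the $1/4$-approximation, the strategy is to lower-bound $E(\alpha(S_{rand}))$ by $\tfrac14$ of the trivial upper bound $nm$ (or $nm-1$), which suffices since OPT $\le nm$. From the Corollary to Lemma~\ref{lem:at-least-2}, unconditionally $\alpha(S_{rand}) \ge 2\bigl|\bigcup_i O_i\bigr| \ge 2n \cdot \mathbf{1}\{\bigcup_i O_i = \mathcal{N}\} + (\text{something}) \ge 2n$ on the event that the realized universe is full, which by Lemma~\ref{p_mnk} happens with probability $p^{m,n,k}$. Taking expectations, $E(\alpha(S_{rand})) \ge 2n\, p^{m,n,k} + (\text{contribution from the other } m-2 \text{ nodes, each} \ge k)$. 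So $E(\alpha(S_{rand})) \ge 2n\,p^{m,n,k} + (m-2)k$. To get $\ge nm/4$ it then suffices to show $2n\,p^{m,n,k} + (m-2)k \ge nm/4$. The hypothesis $k \ge n/4$ already gives $(m-2)k \ge (m-2)n/4 \ge nm/4 - n/2$, so we only need $2n\,p^{m,n,k} \ge n/2$, i.e. $p^{m,n,k} \ge 1/4$; alternatively the hypothesis $k \ge n/\log_2 m$ is designed to force $p^{m,n,k}$ close enough to $1$ (again via the coupon-collector estimate $p^{m,n,k} \ge 1 - n(1-k/n)^m \ge 1 - n e^{-mk/n}$, and $mk/n \ge m/\log_2 m$ which grows) that the bound closes. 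The expected-to-be-hardest step is verifying that the threshold $\min(n/\log_2 m,\, n/4)$ is exactly what makes one of these two regimes apply --- i.e. a careful case split on whether $n/\log_2 m \le n/4$, combining the coupon-collector tail bound on $p^{m,n,k}$ in one case with the crude $(m-2)k$ accounting in the other. I would lay this out as: (a) restate the two lower-bound ingredients; (b) do the case $k \ge n/4$ using only the $(m-2)k$ term plus $p^{m,n,k}\ge 1/4$ (itself from the tail bound when $mk \ge 4n$, which holds); (c) do the case $k \ge n/\log_2 m < n/4$ using $p^{m,n,k} \ge 1 - n e^{-mk/n}$ and pushing this past $7/8$, then $2n\cdot\tfrac78 + (m-2)k \ge nm/4$ for the relevant range of $m$.

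One caveat I would flag in the write-up: part~(2) as stated compares against OPT~$\le nm$, which is a weak benchmark, so the "approximation ratio $1/4$" claim is really "$E(\alpha(S_{rand})) \ge \tfrac14 \cdot (\text{upper bound on OPT})$"; this is worth stating explicitly so the reader is not misled into expecting a ratio against the true optimum via a harder combinatorial argument.
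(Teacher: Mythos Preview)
Your approach is entirely different from the paper's, and it contains a genuine error in part~(1) and a gap in part~(2).

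\textbf{Part (1), the error.} You claim that once some node holds the full universe $\mathcal{N}$, it can exchange with any deficient node because ``the deficient node offers something back.'' This is false under \gt: if $X_i = \mathcal{N}$ then $X_i^c = \emptyset$, so $X_i^c \cap X_j = \emptyset$ for every $j$, and the GT criterion fails. A node that already has everything can never participate in another exchange. Thus your propagation mechanism---full nodes spreading $\mathcal{N}$ to deficient ones---cannot work, and the whole argument for part~(1) collapses.

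\textbf{Part (2), the gap.} Your bound $\alpha(S_{rand}) \ge 2n\,\mathbf{1}\{\bigcup_i O_i = \mathcal{N}\} + (m-2)k$ does handle the case $k \ge n/4$ (indeed even more simply: $\alpha \ge mk \ge mn/4$ deterministically, no need for $p^{m,n,k}$). But in the regime $n/\log_2 m \le k < n/4$, your bound gives at best $2n + (m-2)k \approx 2n + mn/\log_2 m$, which for $m > 16$ is strictly smaller than $mn/4$. So the argument does not close. The threshold $n/\log_2 m$ is not there to make the coupon-collector probability large; it is there because the algorithm runs usefully for about $\log_2 m$ phases, and you need the per-phase gains over those phases to accumulate to $\Theta(n)$.

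\textbf{What the paper actually does.} The paper ignores Lemmas~\ref{lem:at-least-2} and~\ref{p_mnk} entirely in this proof. Instead it tracks the expected cardinality $E(s^i_p)$ of a single node across phases. Conditioning on node $i$'s random partner in phase $p$ being outside the ``influence set'' $\mathcal{M}_{p-1}$ (nodes whose segment sets are correlated with $i$'s through past pairings), the expected gain is $E(s^i_p)\bigl(1 - E(s^i_p)/n\bigr)$. Since $|\mathcal{M}_{p-1}| \le 2^{p-1}$, this conditioning event has probability at least $1/2$ for the first $\lfloor \log_2 m \rfloor$ phases, yielding the recursion $E(s^i_{p+1}) \ge E(s^i_p) + \tfrac12 E(s^i_p)\bigl(1 - E(s^i_p)/n\bigr)$. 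Part~(1) then follows because this recursion drives $E(s^i_p) \to n$ as the number of available phases grows with $m$; part~(2) follows by analyzing the concave map $x \mapsto \tfrac{x}{2}(1 - x/n)$ over $\lfloor \log_2 m \rfloor$ iterations, which is exactly where the $n/\log_2 m$ threshold on $k$ enters.
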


\begin{proof}
We define a \emph{phase} as a round of link activations in steps 4-12 of Algorithm~\ref{alg:rand_heur}. For notational convenience, let $s^i_p = |O_{(i,p)}|$, where $O_{(i,p)}$ denotes the segment set with node $i$ at the beginning of phase $p$. At the start of phase 1, each node has $s^i_1=k$ segments, that are uniformly randomly picked from the universe. 
Since the algorithm is randomized, we focus on a particular node $i$ for analysis, and let in phase $p$, assume that node $i$ and $j_p$ are paired, where $j_p$ is uniformly randomly chosen among the other $m-1$ nodes. 

The segment set at $i^{th}$ node at the beginning of phase $p$ is $O_{(i,p)}=\left\{e_1,e_2,\cdots , e_{|O_{(i,p)}|}\right\}$. For every phase $p$, and any segment $e\in O_{(j_p,p)}$, we define the random variables $X_e(j_p,p)$:
$$X_e(j_p,p) =\begin{cases}
1\quad e \notin O_{(i,p)},\\
0\quad otherwise.
\end{cases} $$

Let $\mathcal{M}_0=\left\{i\right\}$ and $\mathcal{M}_p$ denote the subset of nodes that have been influenced by node $i$ by the end
of phase $p-1$. The set of influenced nodes is inductively defined to be all nodes that have had any pairings with node $i$ until phase $p-1$. For example, if node $i$ and node $2$ are paired in phase $1$, and nodes $2$ and $4$ are paired in phase $2$, then $\mathcal{M}_3 = \{i,2,4\}$.  Note that $\left|\mathcal{M}_p\right| \leq 2^{p-1}$ since in any phase only two nodes are paired with each other and therefore influence of $i$ grows only two-folds in each phase.
 

For phase $1$, node $i$'s segment set cardinality increase \emph{after exchange}, $(s_2^i-s_1^i)$,
will equal the number of segments that are available with node $j_1$ 
but not with node $i$, i.e.
\begin{eqnarray*}
	E\left(s_2^i-s_1^i | j_1 \notin \mathcal{M}_0 \right) &=& 
          E\left(\left.{\sum_{e\in O_{(j_1,1)}} X_e} \right| j_1 \notin \mathcal{M}_0 \right),\\
    &\stackrel{(a)}=&\left|O_{(j_1,1)}\right| \mathbb{P}\left(X_e=1\right), \\  
    & \stackrel{(b)}= & k\left(1-\dfrac{k}{n}\right).
\end{eqnarray*}
where $(a)$ follows from linearity of expectation and since $\mathcal{M}_0 = \{i\}$ implying that  $\{j_1 \notin \mathcal{M}_0\}$ is always true, and (b) follows from $\mathbb{P}\left(X_e=1\right) = (1 - \frac{k}{n})$ 
for every $e$. 
Moving on to phase $2$, similarly, $E\left(\left.{s_3^i-s_2^i}\right|j_2\not\in\mathcal{M}_1,\left|O_{(i,2)}\right|=c_2\right)$
\begin{align*}
= &E\left(\left.{\sum_{e\in O_{(j_2,2)}}X_e} \right| j_2\not\in\mathcal{M}_1, \left|O_{(i,2)}\right|=c_2
\right),\\
= &E (s^{j_2}_{2}) \left(1 - \frac{c_2}{n} \right).
\end{align*}
Then $E(\left.{s_3^i-s_2^i}\right|j_2\not\in\mathcal{M}_1),$
\begin{align}
	&=\sum_{c_2} E\left(\left.{s_3^i-s_2^i}\right|j_2\not\in\mathcal{M}_1,
	\left|O_{(i,2)}\right|=c_2\right)\mathbb{P}(\left|O_{(i,2)}\right|=c_2)\nonumber\\
	& \label{eq:int1}= E(s_2^{j_2})\left(1-\dfrac{E(s_2^i)}{n}\right),
\end{align}
where (\ref{eq:int1}) follows from linearity of expectation, and more importantly from the fact that for $j_2\not\in\mathcal{M}_1$, $O_{i,2}$ and $O_{j_2,2}$ are independent, since initially all segments at all nodes were drawn uniformly randomly and node $j_2$ has had no exchange with any node that had an exchange with node $i$, consequently, $s_2^i$ and $s_2^{j_2}$ are independent.
Also note that since the algorithm is randomized, $s_2^{j_2}$ and $s_2^i$ are identically distributed, hence $E(s_2^{j_2}) = E(s_2^i)$, and from (\ref{eq:int1})
\begin{equation}\label{eq:rand_eq}
E(\left.{s_3^i-s_2^i}\right|j_2\not\in\mathcal{M}_1)= E(s_2^i)\left(1-\dfrac{E(s_2^i)}{n}\right).
\end{equation}
Generalizing (\ref{eq:rand_eq}) for any phase $p$,
\begin{align}
\label{eq:incr}
E(\left.{s_{p+1}^i-s_{p}^i}\right|j_p\not\in\mathcal{M}_{p-1})=E(s_{p}^i)\left(1-\dfrac{E(s_{p}^i)}{n}\right),
\end{align}

Since $|{\cal M}_p| \le 2^{p-1}$, $\mathbb{P}\left(j_p\not\in\mathcal{M}_{p-1}\right)\geq \dfrac{\max\left(m-2^{p-1},0\right)}{m-1}$.
Therefore, from (\ref{eq:rand_eq})
\begin{align*}
	&E(s_{p+1}^i-s_p^i) \geq E(\left.{s_{p+1}^i-s_p^i}\right|j_p\not\in\mathcal{M}_{p-1})\mathbb{P}\left(j_p\not\in\mathcal{M}_{p-1}\right),\\
	& \geq E(s_p^i)\left(1-\dfrac{E(s_p^i)}{n}\right)\left(\dfrac{\max\left(m-2^{p-1},0\right)}{m-1}\right).
\end{align*}
Therefore, the expected cardinality of node $i$ at the end of phase $p$ is given by 
$E(s_{p+1}^i)$ \begin{equation}
\label{eq:low_rand}
\ge E(s_p^i) + E(s_p^i)\left(1-\dfrac{E(s_p^i)}{n}\right)\left(\dfrac{\max\left(m-2^{p-1},0\right)}{m-1}\right) 
\end{equation}
$\forall \ p\ge2$. Expected aggregate cardinality at the end of phase $p$ is given by $mE(s_{p+1}^i)$.

For any $m$, we can find a natural number $a$ such that $2^{a-1} < m \leq 2^a$. Then,
for iterations $p=2$ to $a-1$, $\left(\dfrac{\max\left(m-2^{p-1},0\right)}{m-1}\right) > \dfrac{1}{2}$.
Thus,  $E(s_{p+1}^i)$\begin{eqnarray}\nonumber
&\ge& E(s_p^i) + E(s_p^i)\left(1-\dfrac{E(s_p^i)}{n}\right)\left(\dfrac{\max\left(m-2^{p-1},0\right)}{m-1}\right), \\\label{eq:randspcase}
& \ge& E(s_p^i) + \frac{E(s_p^i)}{2}\left(1-\dfrac{E(s_p^i)}{n}\right), \ \ p=1,\dots, a-1.
\end{eqnarray}
Noting that $a - 1 = \lfloor \log_{2} m \rfloor$, we have $E(s^i_{\lfloor \log_{2} m \rfloor + 1})$
\[
 \geq E(s^i_{\lfloor \log_{2} m \rfloor}) + 
\frac{E(s^i_{\lfloor \log_{2} m) \rfloor}}{2}\left(1- \dfrac{E(s^i_{\lfloor \log_{2} m \rfloor})}{n}\right)
\]
Now the sequence $E(s^i_{\lfloor \log_{2} m \rfloor})$ is monotonically nondecreasing and 
bounded above by $n$; hence, the sequence converges to $n$.
Adding the cardinality of all nodes, we get $\alpha(S_{rand}) \ge nm$. 
Thus, asymptotically in $m$, i.e. for large number of nodes, the randomized algorithm achieves 
the optimal solution to (\ref{eq:probform}), 
since $nm$ is an upper bound on the aggregate cardinality of (\ref{eq:probform}). 
This proves part $(1)$ of the Theorem.

For part (2), we note that function $\frac{x}{2}(1-\frac{x}{n})$ is a concave function for $x=k, k+1, \dots, n$, with maximum at $x=n/2$, and increasing from $x=k,\dots,n/2$. Hence from (\ref{eq:randspcase}), either $E(s_{p+1}^i) > n/4$ or $E(s_p^i)\left(1-\dfrac{E(s_p^i)}{n}\right) > k$ for phases $p=1,\dots, a-1$. In the former case, adding the cardinality across $m$ nodes, we have $\alpha(S_{rand}) \ge \frac{mn}{4}$. In the latter case, $E(s_a^i) > \frac{ka}{2}\left(1-\frac{k}{n}\right)$ by adding for $a$ phases, and $\alpha(S_{rand}) \ge \frac{mka}{2}\left(1-\frac{k}{n}\right)$. We only consider $k < \frac{n}{2}$, since otherwise we already have the $4$-approximation, for which case $\left(1-\frac{k}{n}\right) > \frac{1}{2}$. Since $a =\log_2 m$, if $k\log_2 m > n$, then $\frac{mka}{2}\left(1-\frac{k}{n}\right) =\frac{mk\log_2 m}{2}\left(1-\frac{k}{n}\right) > \frac{mn}{4}$ as required. The proof for $k > \frac{n}{4}$ is immediate.
\end{proof}

\subsection{Greedy-Links Algorithm}
The Greedy-Links algorithm works by examining the impact of a choice $(i,j)$ on the \emph{number of links in the graph after} $i$ and $j$ are paired. At each decision point, the algorithm chooses that pair for which the number of links in the resulting graph is maximized. Ties are broken arbitrarily. The idea is motivated by the observation that if the number of links in the graph is large, then it is likely that the maximal schedule will be longer, and closer to an optimal one. The resulting maximal schedule is denoted by $S_{Glink}$. Algorithm~\ref{max_link} shows the pseudo-code. Time complexity for greedy-links algorithm is $\theta(m^6)$.
\begin{algorithm}[h]
	\caption{Greedy-Links Algorithm}
	\label{max_link}
	\begin{algorithmic}[1]
		\State $r:=0$ and $S_{Glink}:=\left[\cdot\right]$
		\While {$\mathcal{L}\left(S_{Glink},r\right)\neq\emptyset$}
		\State $w_{\left(i,j\right)}:=\left|\mathcal{L}\left(\left[S_{Glink},\left(i,j\right)\right],r+1\right)\right|\quad\forall \left(i,j\right)\in\mathcal{L}\left(S_{Glink},r\right)$
		\State $(i,j)=\underset{(i,j)\in\mathcal{L}(S_{Glink},r)}{\arg\max} w_{(i,j)}$
 \Comment{Ties are broken arbitrarily}
		\State Activate link between node $i$ and $j$
		\State $S_{Glink}\gets\left[S_{Glink},\left(i,j\right)\right]$
		\State $r\gets r+1$
		\EndWhile
		\end{algorithmic}
\end{algorithm}

We next show that Greedy Links algorithm can be shown to be optimal under certain cases. Even though the results stated next are for very restrictive scenarios, however, we conjecture that Greedy Links algorithm is close to optimal for all cases.
\begin{prop}
For $m=4$ nodes, if each node has a segment set $O_i$ 
consisting of exactly $k$ segments. Then, the Greedy Links algorithm is optimal.
\end{prop}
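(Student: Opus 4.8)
The plan is to use the equal-cardinality hypothesis to pin down the \emph{initial} conflict graph and then run a short exhaustive analysis. Since $|O_i|=|O_j|=k$, two distinct $k$-subsets can never be nested, so GT holds between $i$ and $j$ if and only if $O_i\neq O_j$; hence the initial graph $G$ is complete multipartite, its parts being the classes of nodes with identical segment sets. With $m=4$ the only possibilities (up to relabelling) are $G$ empty, $G=K_{1,3}$, $G=C_4\;(=K_{2,2})$, $G=K_4-e$, or $G=K_4$. The empty case is trivial (no activations, $\alpha=4k$, optimal), so the problem reduces to four topologies.

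For each topology I would first pin down the optimum $\alpha^*$ using Lemma~\ref{lem:at-least-2} and its corollary: in any maximal schedule on four nodes the final sets form a chain $A_1\subseteq A_2\subseteq A_3=A_4=U$ with $U$ the realized universe, so $\alpha(S)=2|U|+|A_1|+|A_2|\le 4|U|$. In $G=K_{1,3}$ one checks that after any single activation no edge remains, so every maximal schedule gives the same $\alpha$, which is therefore optimal regardless of Greedy-Links' choices. In $G=C_4$ every maximal schedule consists of one edge followed by its ``opposite'' edge and reaches $4|U|$, so Greedy-Links is again optimal whatever it does. The substantive cases are $G=K_4-e$ and $G=K_4$: here I would introduce notation (in the $K_4-e$ case the repeated set $P:=O_1=O_2$) and split into a handful of sub-cases indexed by which pairwise or triple unions already equal $U$ and by the basic containments among the sets (e.g.\ whether $P\subseteq O_3\cup O_4$, whether $O_i\cup O_j=U$); in each sub-case one traces the at most four remaining activations, computes $\alpha^*$ explicitly, and identifies which first moves and continuations lie on an optimal schedule. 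A useful simplification is that after the first activation the two involved nodes become identical, so the residual graph is a blow-up of a graph on three super-nodes and at most two further activations can occur, keeping each sub-case finite and small.

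The step I expect to be the main obstacle is showing that the Greedy-Links selection rule — maximize the number of links one step ahead — is never tricked into a suboptimal trajectory in the $K_4$ and $K_4-e$ sub-cases. The link count is only a one-step surrogate for the final aggregate, and several candidate activations can tie on next-step link count while leading to different final aggregates; so the argument must establish that among the link-count-maximizing moves at least one lies on an optimal schedule — equivalently, that Greedy-Links, run with a suitable resolution of ties, attains $\alpha^*$ — and then carry a short induction over the at most four remaining activations. Concretely, whenever the first move is forced by a strictly largest link count one checks directly that it lies on an optimal schedule, and whenever it ties one checks that at least one tied move, and then at least one legal continuation from it, stays optimal. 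Verifying this alignment sub-case by sub-case, rather than any single calculation, is where the real work lies; everything else is bookkeeping made tractable by the complete-multipartite reduction.
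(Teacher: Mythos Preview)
Your approach is essentially the paper's: the paper's proof consists of one sentence saying it ``follows by examining all possible equivalent graphs with $4$ nodes having identical number of $k$ segments, and then checking for optimality by brute force,'' with details omitted. Your complete-multipartite reduction (two $k$-sets satisfy GT iff they are distinct) is a clean structural shortcut that the paper does not spell out, but it leads to exactly the same finite case enumeration the paper alludes to; the five topologies you list are precisely the ``equivalent graphs'' the paper means.

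One point to tighten: Algorithm~\ref{max_link} breaks ties \emph{arbitrarily}, so the proposition as stated asserts optimality for \emph{every} tie-breaking, whereas your plan only aims to show that \emph{some} link-count-maximizing move at each step lies on an optimal schedule. To match the statement, your $K_4$ and $K_4-e$ sub-case checks should verify that \emph{all} moves tied for maximum next-step link count lead to an optimal continuation, not merely that one of them does.
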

\begin{proof}
The proof follows by examining all possible equivalent graphs with $4$ nodes having identical number of $k$ segments, and then checking for optimality by brute force. We skip the details due to space constraints.
\end{proof}

\begin{prop}
 For $m=4$ nodes such that optimal aggregate cardinality is $4n$, i.e. all nodes can get all segments of the universe with an optimal algorithm.  Then the aggregate cardinality achieved by the \emph{Greedy links algorithm} is also $4n$, i.e., $\text{if } \alpha(S^*)=4n$ then $ \alpha(S_{Glink})=4n$.
\end{prop}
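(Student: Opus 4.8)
The plan is to show that whenever the optimum gives every one of the four nodes the full realized universe $\mathcal{U} = \bigcup_{i\in\mathcal{M}} O_i$, the Greedy-Links algorithm cannot get ``stuck'' before achieving the same. First I would set up the following invariant: at every step $r$ of \emph{any} maximal schedule with $m=4$, as long as not all four nodes have $\mathcal{U}$, the current graph $G_S(r)$ has at least one link. This is essentially a restatement of Lemma~\ref{lem:at-least-2}: if a maximal schedule has terminated, exactly the two top nodes $i_{m-1}, i_m$ are guaranteed $\mathcal{U}$, and the only way fewer than four nodes end up with $\mathcal{U}$ is that the schedule chose ``bad'' activations along the way. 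So the real content is: starting from a configuration where $\alpha(S^*) = 4n$ is achievable, Greedy-Links always has a move that keeps $4n$ achievable.

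The key structural fact I would establish first is a characterization of the $m=4$ configurations from which $4n$ is reachable. With four nodes $\{1,2,3,4\}$ and segment sets $O_1,\dots,O_4$, the only way all four can reach $\mathcal{U}$ is via a schedule that, up to relabeling, pairs $\{1,2\}$ then $\{3,4\}$ then one cross pair (or some permutation thereof): after the first two exchanges the four sets are $O_1\cup O_2, O_1\cup O_2, O_3\cup O_4, O_3\cup O_4$, and a final cross exchange works iff $(O_1\cup O_2)$ and $(O_3\cup O_4)$ each contain a segment the other lacks, i.e. $O_1\cup O_2 \ne O_3\cup O_4$ and neither contains the other — equivalently $O_1\cup O_2 \ne \mathcal{U} \ne O_3\cup O_4$, which forces $O_1\cup O_2 \cup O_3 \cup O_4 = \mathcal{U}$ with both unions proper. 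So: $\alpha(S^*)=4n$ iff there is a partition of $\{1,2,3,4\}$ into two pairs $P, Q$ with (i) both pairs GT-linked, (ii) $\bigcup_{i\in P} O_i \subsetneq \mathcal{U}$ and $\bigcup_{j\in Q} O_j \subsetneq \mathcal{U}$, and (iii) $\bigcup_{i\in P}O_i \cup \bigcup_{j\in Q}O_j = \mathcal{U}$.

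Next I would analyze the very first Greedy-Links move. After a legal first activation $(a,b)$, the new configuration has two identical sets $O_a\cup O_b$ on $a$ and $b$ and the original $O_c, O_d$ on the remaining two. I would count links $\lvert\mathcal{L}([S,(a,b)],1)\rvert$ as a function of which pair $(a,b)$ is chosen and compare against the ``good'' pair predicted by the characterization above: I claim the pair $(a,b)$ maximizing the next-step link count is always one for which $4n$ remains reachable, because a first move that destroys reachability of $4n$ must create a configuration that is close to terminal — e.g. it puts $\mathcal{U}$ on two nodes prematurely, or it merges sets so that one side dominates — and in all such cases a direct enumeration of the (finitely many) link-pattern types on four nodes shows the resulting link count is no larger than that of a ``good'' first move. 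Then, crucially, after one good move the residual instance on four nodes $\{O_a\cup O_b, O_a\cup O_b, O_c, O_d\}$ still has optimal aggregate $4n$ (by the same characterization applied to it, using pair $\{a,b\}$ versus... no — rather using pair $\{a,c\}$ vs $\{b,d\}$ type splits, which one checks still satisfy (i)–(iii)), and it has strictly fewer ``deficient'' nodes, so induction on the number of nodes lacking $\mathcal{U}$ closes the argument.

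The main obstacle I anticipate is the link-counting comparison in the first-move step: ``the greedy choice is a good choice.'' On four nodes the GT-graph can be $K_4$, $K_4$ minus an edge, a path, a triangle-plus-isolated-vertex, etc., and for each shape one must verify that some link-count-maximizing activation preserves $4n$-reachability. This is a finite case check — exactly the kind of brute-force verification the authors used in the previous proposition — so I would either (a) reduce it to the previous proposition's machinery by noting all $m=4$ equal-or-unequal configurations reachable with $\alpha(S^*)=4n$ fall into a short list of equivalence classes under relabeling and ``which segments are shared,'' and check each, or (b) argue more cleverly that any first move that strictly decreases the link count below the greedy value must have placed $\mathcal{U}$ on $\ge 2$ nodes or created containment, both of which are detectable from the set relations and both of which a greedy (link-maximizing) step provably avoids when a $4n$-schedule exists. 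I expect route (b) to be the cleaner write-up, with the observation that a GT-exchange between $a$ and $b$ can only \emph{remove} the links incident to $a$ or $b$ that become containments, and a $4n$-reachable configuration always has a first move removing the fewest such links.
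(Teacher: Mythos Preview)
Your central structural claim --- the ``iff'' characterization of $4n$-reachable four-node configurations via a partition into two GT-linked pairs $P,Q$ with $\bigcup_{i\in P}O_i\subsetneq\mathcal{U}$ and $\bigcup_{j\in Q}O_j\subsetneq\mathcal{U}$ --- is false in the ``only if'' direction, and this breaks the induction you build on top of it. A concrete counterexample: take $\mathcal{U}=\{1,2,3,4\}$ and
\[
O_1=\{1,2\},\quad O_2=\{1,3\},\quad O_3=\{1,4\},\quad O_4=\{2,3,4\}.
\]
For each of the three partitions one of the two unions already equals $\mathcal{U}$, so your condition (ii) fails every time. Yet the schedule $(1,2)\to(1,3)\to(2,4)$ gives all four nodes $\mathcal{U}$: after $(1,2)$ the sets are $\{123,123,14,234\}$; after $(1,3)$ they are $\{1234,123,1234,234\}$; after $(2,4)$ all four are $\{1234\}$. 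So $\alpha(S^*)=4n$ here even though no pair--pair--cross schedule exists. Your first-move analysis, your claim that the greedy choice ``is always one for which $4n$ remains reachable,'' and your inductive hypothesis all appeal to this characterization, so the argument does not go through as written. (Your route (b) has the same defect: ``created containment'' is not the only way a first move can kill $4n$-reachability once the pair--pair picture is abandoned.)

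The paper takes a different tack that avoids any global characterization. It isolates the \emph{critical step} $\ell$ at which, for the first time, some activation produces two nodes holding $\mathcal{U}$. Claim~1 shows that along the optimal schedule $|\mathcal{L}(S^*,\ell)|\geq 4$ (using only that $O_1\cup O_2=O_3\cup O_4=\mathcal{N}$ with all four sets proper). Claim~2 shows, by contrapositive, that if Greedy-Links \emph{fails} to reach $4n$ then at its own critical step it has at most $3$ links (because its schedule must then terminate one step later, forcing a very constrained edge set). The finish is then to argue --- the paper concedes this is by brute force over the finitely many four-node link patterns --- that since Greedy-Links maximizes the next-step link count, it cannot arrive at its critical step with fewer than $4$ links when the optimal schedule manages $\geq 4$. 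If you want to repair your approach, you will need either a correct reachability characterization (which must include schedules like the one above that interleave pairs), or to shift, as the paper does, from a first-move induction to an analysis of the last two moves.
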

\begin{proof}
Without loss of generality, we only consider the case when none of the $4$ nodes have all the segments of the universe to begin with.

\emph{Claim 1:} For the optimal schedule $S^*$, $\left|\mathcal{L}(S^*,\ell)\right|\geq 4$, where for the first time after $\ell$ exchanges any two nodes can exchange to get all the segments of the universe.

Consider the optimal schedule $S^*$. From the hypothesis of the claim, the length of $S^*$ is $\ell+2$, $\left|S^*\right|=\ell+2$, since for the optimal schedule all nodes can get all segments of the universe and that can happen in two exchanges after $\ell$. For $S^*$, after the $\ell+1^{th}$ exchange, the number of links available is $\left|\mathcal{L}(S^*,\ell+1)\right|=1$, since two nodes have necessarily obtained all segments of the universe and cannot have links to any other nodes,  and the only link  present is between the remaining two nodes. Without any loss of generality, let us assume that $(1,2)$ link is still active after $\ell+1$ exchanges, $\mathcal{L}(S^*,\ell+1)=\left\{1,2\right\}$. 
	
Hence, link $(3,4)$ was activated in the $\ell+1^{th}$ exchange in $S^*$, such that $O_3(S^*,\ell)\cup O_4(S^*,\ell)=\mathcal{N}$. Also, $O_1(S^*,\ell)\cup O_2(S^*,\ell)=\mathcal{N}$ as $S^*$ ensures that all nodes get all the segments of the universe. Also note that $(\ell+1)^{th}$ activation yields first pair of nodes having universe.
	
Therefore, after $\ell$ exchanges, node $1$ has an edge to node $2$, and an edge to at least either of nodes $3$ or $4$, since $O_1(S^*,\ell) \subset \mathcal{N}, O_3(S^*,\ell) \subset \mathcal{N}, O_4(S^*,\ell) \subset \mathcal{N}$ and $O_3(S^*,\ell)\cup O_4(S^*,\ell)=\mathcal{N}$. Similarly, node $2$ will have an edge either to node $3$ or $4$. Therefore, $\left|\mathcal{L}(S^*,\ell)\right|\geq 4$.

\emph{Claim 2:} For Greedy links algorithm schedule $S_{Glink}$, if all nodes do not get all the segments of the universe, then $\left|\mathcal{L}(S_{Glink},\ell)\right|\le  3$, where for the first time after $\ell$ exchanges $S_{Glink}$ algorithm activates two nodes such that they get all the segments of the universe.

Subclaim 2.1: Under the hypothesis of Claim 2, the length of the $S_{Glink}$ schedule is $\ell+1$, i.e. $|S_{Glink}|=\ell+1$. 

The proof of Subclaim 2.1 is immediate since otherwise $S_{Glink}$ algorithm (by definition) would not have activated two nodes such that they get all the segments of the universe, and from the fact all nodes do not get all the segments of the universe.
%

At the $\ell+1^{th}$ exchange, let nodes $1$ and $2$ exchange their pieces. Since $|S_{Glink}|=\ell+1$, this will be the last possible exchange. Therefore, after the $\ell^{th}$ exchange, if node $3$ has edge to both node $1$ and $2$, then node $4$ cannot have an edge to either node $1$ or $2$, since otherwise the $S_{Glink}$ algorithm will not be at its last possible exchange. Hence, after the $\ell^{th}$ exchange, the set of edges are either $(1,2), (1,3), (1,4)$ or $(1,2), (2,3), (2,4)$, proving Claim 2.

The proof of the Theorem follows by a contrapositive argument to Claim 2, i.e. if we show that  
$\left|\mathcal{L}(S_{Glink}),\ell)\right|\ge  4$, where for the first time after $\ell$ exchanges $S_{Glink}$ algorithm activates two nodes such that they get all the segments of the universe, then with $S_{Glink}$ algorithm all nodes get all the segments of the universe. 

The claim that $\left|\mathcal{L}(S_{Glink}),\ell)\right|\ge  4$, follows from the fact the the optimal algorithm does so \emph{(Claim 1)} and the fact that Greedy links $S_{Glink}$ algorithm keeps maximum number of links \emph{alive}, and using brute force we can verify that  $S_{Glink}$ will also have at least $4$ edges when for the first time a pair of nodes exchange to get all the segments of the universe.
\end{proof}

\subsection{Polygon Algorithm}
The polygon algorithm is defined for a set of nodes $\mathcal{M}^U\subseteq\mathcal{M}$, 
such that each node $i\in\mathcal{M}^U$ has at least one unique segment with itself, i.e., 
$O_i\backslash \bigcup_{j\in\mathcal{M}^U\setminus\left\{i\right\}} O_j \neq \emptyset \quad \forall i\in \mathcal{M}^U$. 

Let us consider a row vector $P$ which is a permutation of nodes in 
$\mathcal{M}^U$. We define the \emph{left circular shift} of the row vector $P$:
$P=\left[i_1, i_2, i_3,\cdots, i_{\left|\mathcal{M}^U\right|}\right], 
LeftCircShift(P)=\left[i_2,i_3,i_4,\cdots, i_{\left|\mathcal{M}^U\right|},i_1\right].$

The proposed polygon algorithm in each round, picks two neighboring nodes in $P$ starting from  left, and activates the link between them and repeats this process until there are no pairs left. In the next round, the above procedure is repeated with $LeftCircShift(P)$. Note that this algorithm critically depends on the choice of starting permutation $P$. The starting permutation used in round $1$ is the one that has the \emph{smallest number of unique segments} placed at the right most location. 
For the case when, $\mathcal{M}^U \subset \mathcal{M}$, for maximizing the aggregate cardinality with the polygon algorithm,  we need to find the subset $\mathcal{M}^U$ with the largest cardinality. This, however, is a combinatorial problem and we use a greedy algorithm for this purpose that has linear complexity. The complexity of polygon algorithm is $\theta(m^3)$. Algorithm 3a provides the pseudo-code.

\begin{subalgorithms}
	\begin{algorithm}[h]
		\caption{Polygon algorithm}
		\label{alg:poly_specific}
		\begin{algorithmic}[1]
			\State Find $M^U$, Algorithm {\bf 3b},
			\While{$\left|\mathcal{M}^U\right|\geq 2$} 
			\State Let $M$ be any permutation of $\mathcal{M}^U$ \Comment{\gt is satisfied for any pair of nodes in $\mathcal{M}^U$}
			\State $l:=0, r:=0, S_{poly}:=\left[\cdot\right]$
			\While {$l \leq  \left\lfloor\dfrac{\left|\mathcal{M}^U\right|-1}{2}\right\rfloor$}			\For {$p=2:2:\left|\mathcal{M}^U\right|-1$}\Comment{$p$ is a even number}
			\State $i\gets (p-1)^{th}$ element in $M, j\gets p^{th}$ element in $M$
			\State Activating link between $i$ and $j$ 
			\State $S_{poly}\gets[S_{poly},\left(i,j\right)],\, r\gets r+1$
			\EndFor
			\State $M\gets LeftCircShift(M)$
			\State $l\gets l+1$
			\EndWhile
			\State Find $M^U$
			\EndWhile
		\end{algorithmic}
	\end{algorithm}
	\vspace{-10pt}
	\begin{algorithm}[h]
		\caption{Find $\mathcal{M}^U$}
		\label{alg:unique1}
		\begin{algorithmic}[1]
			\State $\mathcal{M}_{pick}:=\mathcal{M}$ and $\mathcal{M}^U:=\emptyset$
			\While {$\mathcal{M}_{pick}\neq\emptyset$}
			\State Choose a element $i$ from $\mathcal{M}_{pick}$
			\If {$O_i\setminus \cup_{j\in M^U} O_j\neq\emptyset$ and $\underset{j\in M^U}{\min}\left|O_j\setminus O_i\right| \neq 0$}
			\State $\mathcal{M}^U\gets\mathcal{M}^U\cup\left\{i\right\}$
			\EndIf
			\State $\mathcal{M}_{pick}\gets\mathcal{M}_{pick}\setminus \left\{i\right\}$
			\EndWhile
		\end{algorithmic}
	\end{algorithm}
	
\end{subalgorithms}
For the special case of $\mathcal{M}^U = \mathcal{M}$, i.e. each node has an unique segment, we show that the polygon algorithm is optimal for solving (\ref{eq:probform}).
\begin{prop}
	\label{poly_optimal}
	If $\mathcal{M}^U = \mathcal{M}$, then the polygon algorithm is optimal for solving (\ref{eq:probform}). 
	%
\end{prop}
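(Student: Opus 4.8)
The plan is to pin down the optimal value of (\ref{eq:probform}) exactly for this class of instances and then verify that the polygon algorithm attains it. Write $U=\bigcup_{i\in\mathcal{M}}O_i$ for the realized universe and, for each node $i$, let $\mathrm{uniq}(i)=O_i\setminus\bigcup_{j\ne i}O_j$ be its private segments and $u_i=|\mathrm{uniq}(i)|$; the hypothesis $\mathcal{M}^U=\mathcal{M}$ is precisely the statement $u_i\ge 1$ for all $i$. Since every exchange only replaces two segment sets by their union, no node can ever hold more than $U$, so $\alpha(S)\le m|U|$ for every schedule $S$. I will show this bound is tight when $m$ is even, while the optimum equals $m|U|-\min_i u_i$ when $m$ is odd, and that the polygon algorithm reaches this value in both cases.

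Call a node \emph{full} at a given step if it currently holds $U$. The key observation for the upper bound is that \emph{the number of full nodes is always even}. It is $0$ before any exchange, and an exchange cannot involve a full node (a full node's current set is a superset of every other node's current set, so the GT criterion would be violated), so both paired nodes are non-full beforehand; their common post-exchange set either equals $U$ (adding two full nodes) or does not (adding none), so the count changes by $0$ or $2$. A second observation bounds the non-full nodes: a node that held $\mathrm{uniq}(w)$ for every $w$ would be a superset of every $O_w$, hence full; so a non-full node $v$ misses $\mathrm{uniq}(w)$ for some $w$, and because $\mathrm{uniq}(w)\subseteq O_w$ travels as a block in every exchange, $v$ in fact misses all of $\mathrm{uniq}(w)$, i.e.\ at least $u_w\ge\min_i u_i$ segments. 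Combining these: for even $m$, $\alpha(S)\le m|U|$; for odd $m$, the parity observation forbids all $m$ nodes being full (Lemma~\ref{lem:at-least-2} already gives at least two full nodes), so there is at least one non-full node and $\alpha(S)\le (m-1)|U|+(|U|-\min_i u_i)=m|U|-\min_i u_i$.

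It remains to check that the polygon algorithm attains these values when $\mathcal{M}^U=\mathcal{M}$. First, the algorithm never stalls: since node $i$ permanently retains $\mathrm{uniq}(i)$, two nodes' segment sets are comparable only if one has already received a flow from the other, and the shift-and-pair schedule keeps the ``influence windows'' of the two nodes paired in any round offset but non-nested, so neither set contains the other and the GT criterion holds at every link the algorithm activates. Tracking the segment sets round by round, each round at least doubles the width of every node's influence window (apart from the single idle node when $m$ is odd), so within the $\lceil\log_2 m\rceil$ rounds available the algorithm propagates every private block $\mathrm{uniq}(\cdot)$ to every node when $m$ is even, giving $\alpha=m|U|$; when $m$ is odd it leaves exactly one node short of $U$, and by the parity structure that node misses precisely $\mathrm{uniq}(v^{*})$ for the node $v^{*}$ placed in the rightmost slot of the starting permutation. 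Because that slot is filled by a node minimizing $u_i$, the value attained is $m|U|-\min_i u_i$, matching the upper bound; one also checks the resulting configuration is maximal, so this is a feasible schedule for (\ref{eq:probform}). Hence the polygon schedule is optimal.

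The main obstacle is this last step. The upper bound and the even-parity argument are short and robust, but rigorously establishing the round-by-round propagation pattern under the left-circular-shift pairing --- in particular that the algorithm never pairs two comparable sets, and that for odd $m$ exactly the private segments of the rightmost node (and nothing more) fail to reach every node, despite the idle node migrating each round --- requires a careful inductive description of which private blocks each node holds after each round. Everything else, including the reduction of optimality to matching a computed value, is routine.
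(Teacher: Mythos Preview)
Your proposal is considerably more rigorous than the paper's own argument. The paper's ``proof'' is literally a figure for $m=7$ together with the assertion that the algorithm reaches $mn$ (even $m$) or $mn-1$ (odd $m$), matched against the trivial upper bound stated without justification in Section~\ref{sec:sysmodel}. Your parity argument for the upper bound --- the number of full nodes stays even because a full node can never satisfy \gt, and any non-full node must be missing an entire private block $\mathrm{uniq}(w)$ since segment sets are always unions of initial $O_j$'s --- is correct and gives the sharper value $m|U|-\min_i u_i$ for odd $m$. This is a genuine improvement over the paper: when $\min_i u_i>1$ the paper's bound $m|U|-1$ is not tight, so the paper's optimality claim is strictly speaking not closed, whereas yours is. Your bound also explains the otherwise unmotivated rule that the node of smallest $u_i$ is placed rightmost.

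On the algorithm side, both you and the paper hand-wave; the paper points at a picture, you sketch an ``influence window'' argument. Your sketch, however, contains a concrete error worth fixing: the influence window does \emph{not} double each round, and the algorithm does not run in $\lceil\log_2 m\rceil$ rounds. For $m=7$, after rounds $0$ and $1$ node $6$ goes from $\{O_5,O_6\}$ to $\{O_5,O_6,O_7\}$, width $2\to 3$; the growth is essentially additive (about $+2$ per round for non-idle nodes), and the while loop executes $\lfloor(m-1)/2\rfloor+1\approx m/2$ rounds. So the clause ``within the $\lceil\log_2 m\rceil$ rounds available'' uses the wrong budget and the doubling heuristic would not close it. The correct invariant to carry is that after round $l$ each non-idle node holds a contiguous cyclic block of roughly $2(l{+}1)$ of the initial sets $O_j$; with $\approx m/2$ rounds this reaches $m$, and for odd $m$ the last idle node (which migrates leftward one step per round) is the unique node still missing a block, namely $\mathrm{uniq}(v^{*})$ for the initially rightmost $v^{*}$. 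Replace the doubling/log claim by this additive picture and your argument goes through at the level of rigor the paper is content with --- while your upper-bound half is strictly better than the paper's.
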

\begin{proof} For lack of space, the proof is best described with a figure. In Fig. \ref{fig:polygon}, we consider $m=7$ nodes and each node $i$ contains at least one unique segment denoted by $i$. Each node can contain more segments, however, it is sufficient to just consider the unique segments. The figure is self-explanatory given the polygon algorithm, and in this odd-$m$ case achieves aggregate cardinality $mn-1$, where except one node every other node gets the universe, while for even-$m$ case every node gets the universe and achieves the aggregate cardinality of $mn$, that is maximum possible for (\ref{eq:probform}). If all segments of the universe are not seen by at least one node, then replace $n$ by $\left|\cup_{i\in {\cal M}} O_i\right|$, and the algorithm is still optimal.
	\begin{figure}[h]
		\centering
		\includegraphics[width=0.7\linewidth]{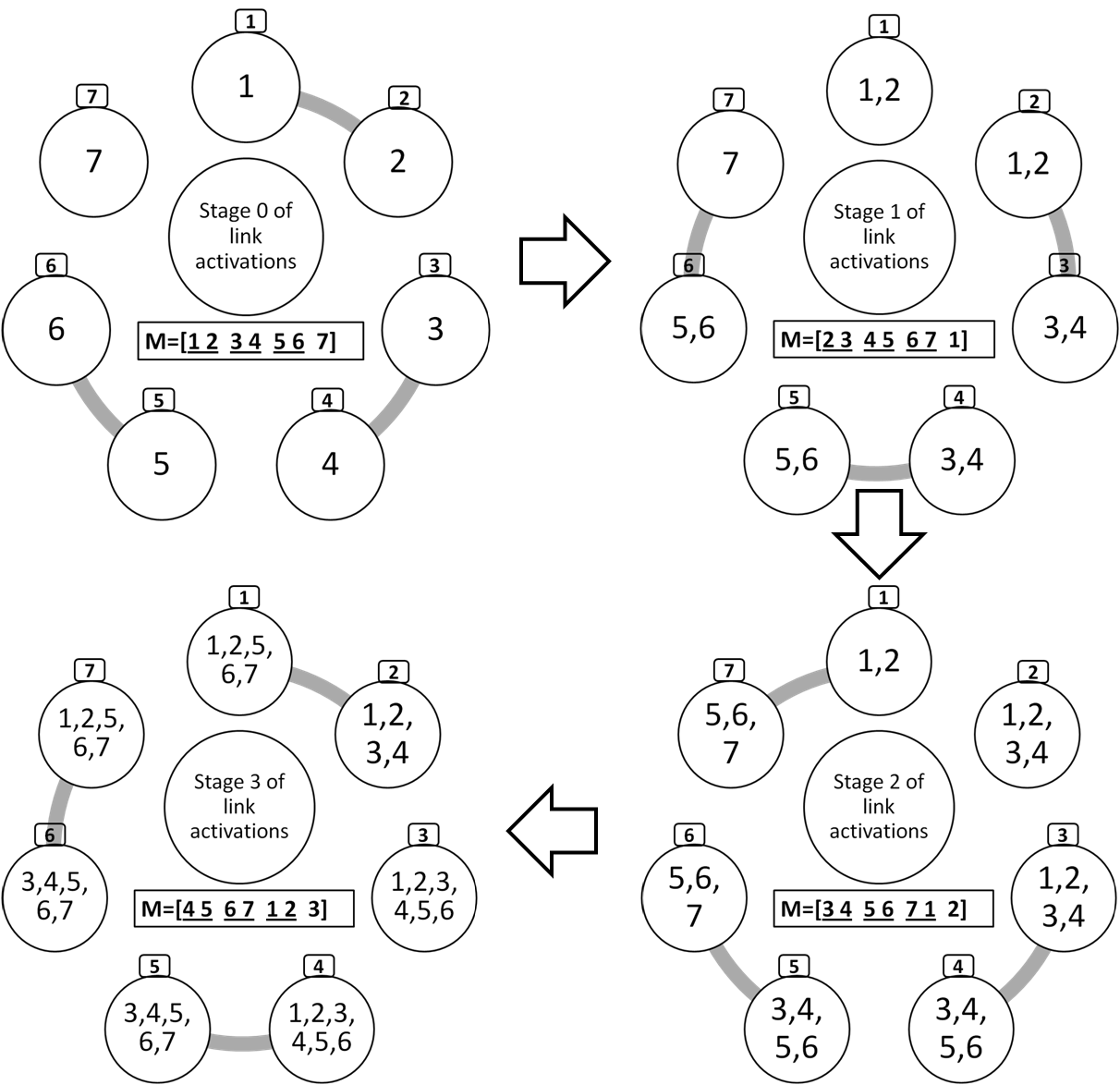}
		\caption{Demonstration of Polygon algorithm. }
		\label{fig:polygon}
	\end{figure}
\end{proof}

\subsection{Greedy-Incremental Algorithm}
The Greedy-Incremental algorithm works at each decision epoch, chooses that pair of nodes for exchange that has maximum sum of newly added segments to each node after the exchange. To each link $(i,j) \in \mathcal{L}\left(S_{Ginc},r\right)$, we assign a weight $w(i,j)$ (in step 3 of Algorithm~\ref{max_weight}) given by the increment of aggregate cardinality if $(i,j)$ is activated. The link with the maximum weight is selected for activation. Ties are broken randomly. Greedy-Incremental algorithm has complexity of $\theta(m^4)$.

\begin{algorithm}[h]
	\caption{Greedy-Incremental Algorithm}
	\label{max_weight}
	\begin{algorithmic}[1]
		\State $r:=0$ and $S_{Ginc}:=\left[\cdot\right]$
		\While {$\mathcal{L}(S_{Ginc},r)\neq \emptyset$}
		\State $w_{(i,j)}:=2\left|O_i(S_{Ginc},r)\cup O_j(S_{Ginc},r)\right|-\left|O_i(S_{Ginc},r)\right|-\left|O_j(S_{Ginc},r)\right|\quad \forall (i,j)\in\mathcal{L}(S_{Ginc},r)$
		\State $(i,j)=\underset{(i,j)\in\mathcal{L}(S_{Ginc},r)}{\arg\max} w_{(i,j)}$ 
		\Comment{Ties are broken arbitrarily}
		\State Activate link between node $i$ and $j$
		\State $S_{Ginc}\gets\left[S_{Ginc},\left(i,j\right)\right]$
		\State $r\gets r+1$
		\EndWhile
	\end{algorithmic}
\end{algorithm}
\vspace{-10pt}
\subsection{Rarest First Algorithm}
Rarest first algorithm activates the link so that availability for the subset of segments with the minimum number of nodes is maximally increased. The steps have been described in detail in Algorithm \ref{rare_first}. Rarest first algorithm has complexity of $\theta(nm^4)$.
\begin{algorithm*}
	\caption{Rarest first Algorithm}
	\label{rare_first}
	\begin{algorithmic}[1]
		\State $r:=0, S_{rare}:=\left[\cdot\right]$
		\While {$\mathcal{L}(S_{rare},r)\neq\emptyset$}
		\State $\mathcal{N}_p:=\left\{e\in \mathcal{N}: \underset{j\in \mathcal{M}}{\sum}\mathbb{I}_{\left\{e\in O_j(S_{rare},r)\right\}}=p\right\} \quad\forall 1\leq p\leq m$
		\State $f(\mathcal{N}_p,(i,j)):=\sum_{e\in \mathcal{N}_p} \mathbb{I}_{\left\{e\in O_i(S_{rare},r)\cap O_j^c(S_{rare},r)\right\}}+\mathbb{I}_{\left\{e\in O_j(S_{rare},r)\cap O_i^c(S_{rare},r)\right\}} \quad\forall 1\leq p\leq m, (i,j)\in \mathcal{L}(S_{rare},r)$ \Comment{Increment in the availaibility of segments in $\mathcal{N}_p$ if link between $i$ and $j$ is activated}
		\State $\mathbf{R}_{(i,j)}=\left[\mathbb{I}_{\left\{O_i(S_{rare},r)\cup O_i(S_{rare},r)\neq \mathcal{N} \right\}}\quad f(\mathcal{N}_1,(i,j))\quad f(\mathcal{N}_2,(i,j))\quad\cdots\quad f(\mathcal{N}_m,(i,j))\right]_{1\times (m+1)} \forall (p_1,p_2)\in \mathcal{L}(S_{rare},r)$
		\State Define link activation preference matrix $\mathbf{L}_A$ with $\mathbf{R}_{(i,j)} \forall (i,j)\in \mathcal{L}$ as row vectors.
		\State Sort rows of $\mathbf{L}_A$ such that column 1 is descending order, then by column 2 in descending order, then by column 3 in descending order and so on.
		\State $(i,j):=\arg (\text{Row at the top of }\mathbf{L}_A)$
		\EndWhile
	\end{algorithmic}
\end{algorithm*}

In each round (Lines 2-9), universe $\mathcal{N}$ is divided in $m$ partitions such that $p^{th}$ partition $\mathcal{N}_p$ of segments is available with $p$ nodes. The function $f\left(\mathcal{N}_p,\left(i,j\right)\right)$ evaluates the increment in the availability of segments in $\mathcal{N}_p$ if the link between nodes $i$ and $j$ are allowed to exchange segments where $1\leq p\leq m$ and $\left(i,j\right)\in\mathcal{L}\left(S_{rare},r\right)$. Then for each of the links $\left(i,j\right)\in\mathcal{L}\left(S_{rare},r\right)$ we define a row vector $\mathbf{R}_{\left(i,j\right)}$ having $m+1$ entries. First entry indicates the whether the nodes will have the universe if link $\left(i,j\right)$ is activated. Following entries record the increment in each of the partitions of the universe if the link $\left(i,j\right)$ is activated in increasing order of $p$.

Then we arrange different rows in form of a link activation matrix $\mathbb{L}_A$ such that column 1 is in descending order, followed by column 2 and so on. Sorting rows of $\mathbb{L}_A$ in descending order by column 1 deters activating links which can produce nodes having universe. To break the tie among links having same value in column 1, we sort in descending order by column 2. This would give more preference to links activating whom will increase the segments which are available with only 1 node or are rarest. To break the tie further we use column 3 so that availability of the segments which are with 2 nodes can be increased. This process is continued for all columns.

After completion of this process link corresponding to the top row is activated.

\vspace{-7pt}
\section{Performance Comparison}
\label{sec:results}
Fig.~\ref{fig:all} shows a performance comparison of the various algorithms. For a particular set of (number of nodes $m$, universe size $n$, initial segment set size $k$) values, we generate 100 segment sets uniformly at random. For each such ``sample point'' or ``run'' the optimal aggregate cardinality is computed, and each of the five algorithms simulated. The average aggregate cardinality values (over 100 sample points) are shown, along with 95\% confidence intervals.
\begin{figure*}[ht]
  \centering
  \includegraphics[width=0.68\textwidth]{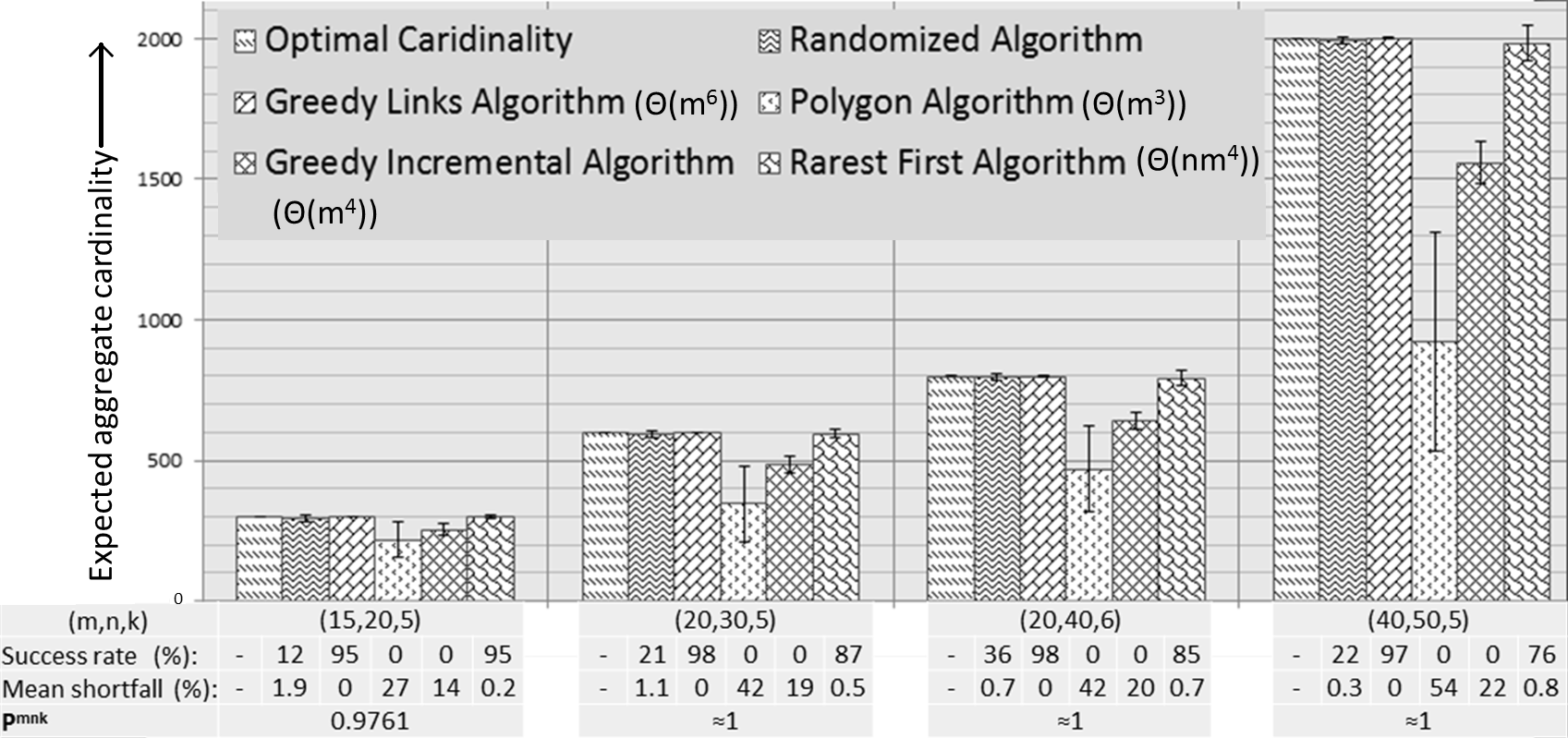}
  \caption{Expected aggregate cardinality achieved by different algorithms, averaged over 100 randomly generated segment sets. Four different problem instances ($(m,n,k)$) are shown, with corresponding $p_{m,n,k}$ values. For each object set, the success rate and mean shortfall of each algorithm are shown as well.}
  \label{fig:all}
\end{figure*}

If an algorithm achieves the \emph{optimal} aggregate cardinality corresponding to a
sample point, the algorithm is said to be ``successful'' on that run. 
Fig.~\ref{fig:all} also shows the ``success rate'' of an algorithm, which is the
fraction of runs for which the algorithm was successful. The average shortfall from
the optimal aggregate cardinality is shown as well. Lastly, Fig.~\ref{fig:all} 
displays the value of $p_{m,n,k}$, calculated using \eqref{eq:pmnk}.
We recall that $p_{m,n,k}$ is the probability
that after the initial choice of segment sets uniformly at random, the universe
$\mathcal{N}$ \emph{is} available among the nodes. 

Our results show that the Greedy Links ($GL$) algorithm performs the best---it is able to
achieve an aggregate cardinality close to the optimal consistently (success rate
$\geq 95\%$). It is followed closely by the Rarest First ($RF$) 
with similar success rates and mean shortfalls. On an average, Randomized algorithm ($R$) also performs better than Greedy Incremental ($GI$) and Polygon ($P$) algorithms.

The aggregate cardinality achieved by the $R$ algorithm is \emph{very}
close to the optimal, but in most runs, it falls just a little bit short.
For this reason, we see in Fig.~\ref{fig:all} that the heights of the bars corresponding
to the optimal and $R$ algorithms match almost exactly, yet the success rate of
the $R$ algorithm is quite low. On the other hand, the $GL$ algorithm enjoys a very
high success rate, and therefore, is nearly optimal. It is can be seen that the
mean shortfall corresponding to the $GL$ algorithm is zero.

The $RF$ algorithm starts off on a very promising note, but as the problem size
gets bigger, there is a dip in performance. As $(m,n,k)$ increases from (15,20,5)
to (40,50,5), the success rate drops from 95\% to 76\%, while the mean shortfall
increases from 0.2\% to 0.8\%. It is noteworthy that the shortfall of the $R$ algorithm 
reduces as the problem size gets bigger.

Our results indicate that the $GI$ algorithm does not perform well in
general, even though it is a ``natural'' greedy algorithm---it chooses to activate
the link that yields the largest immediate benefit (the largest increase in
aggregate cardinality). Similarly, even though the $P$ algorithm can be
optimal under certain conditions, it does not do well in general.

The significance of $p_{m,n,k}$ is this: If $p_{m,n,k}$ is high, then there is a good
chance that the social group will need to download \emph{only a few} segments using
the expensive resource. A high value of $p_{m,n,k}$ means that with high probability,
the full universe $\mathcal{N}$ is scattered among the nodes in the group. In such a
situation, our results suggest that the GL or RF or R algorithm is very likely
to achieve near-optimal aggregate cardinality, leaving only a few segments to 
be downloaded.

\textit{Remark:} We note that the more complex algorithms exhibit better performance
than the less complex ones.


In Table~\ref{tab:bounds}, we focus on the Randomized algorithm, and examine the
expected aggregate cardinality achieved for different sets of $(m,n,k)$. It can be
seen that as the number of nodes $m$ increases, the expected aggregate cardinality
approaches $mn$; this is consistent with the asymptotic optimality 
and approximation ratio results shown in Theorem~1.

\begin{table}[h]
	\centering
	\caption{Bounds for randomized algorithm}
	\begin{tabular}{|c|c|c|c|c|}
		\hline
		m & n & k & $\begin{array}{c}
		E(\alpha(S_{rand}))\\
		\text{(Simulation)}
		\end{array}$ & \parbox[c]{0.15\textwidth}{Lower bound on mean aggregate cardinality} \\
		\hline
		60 & 100 & 3 & $5027.0\pm 347.9$ & 3867.4 \\
		\hline
		60 & 100 & 5 & $5715.7\pm 219.1$ & 4829.2 \\
		\hline
		60 & 100 & 7 & $5919.4\pm 127.6$ & 5318 \\
		\hline
		80 & 200 & 15 & $15959\pm 102$ & 15106 \\
		\hline
		100 & 300 & 15 & $29819\pm 254$  & 27486 \\
		\hline
	\end{tabular}%
	\label{tab:bounds}%
\end{table}%

\vspace{-10pt}
\section{Conclusions and Future Work}
\label{sec:concl}
The problems studied in this paper were motivated by the context of a social group, in which each group member was interested in obtaining a universe of segments. The approach was to study how mutual exchanges among the group members can help in maximizing the aggregate cardinality of nodes' segment sets, at low cost. To tackle the problem of free riding that arises in such situations, we proposed the novel GT criterion, and explored a number of algorithms for exchange of segments, where each exchange had to be GT-compliant. Analysis of the algorithms yielded interesting properties, and performance was benchmarked against the optimal aggregate cardinality.

The present model assumes that nodes arrive empty-handed and download segments using the expensive resource, to kick-start the process of GT-compliant mutual exchanges. After this process is over, still more segments may need to be downloaded; it will be interesting to understand the best download strategy in the second round, utilizing the state of the system at the end of the first. Further, a central scheduler with a view of all the 
nodes' segment sets has been assumed in this paper; we would like to understand the issues when nodes act on their own.

\bibliography{info_exchange}

\begin{thebibliography}{10}
\providecommand{\url}[1]{#1}
\csname url@samestyle\endcsname
\providecommand{\newblock}{\relax}
\providecommand{\bibinfo}[2]{#2}
\providecommand{\BIBentrySTDinterwordspacing}{\spaceskip=0pt\relax}
\providecommand{\BIBentryALTinterwordstretchfactor}{4}
\providecommand{\BIBentryALTinterwordspacing}{\spaceskip=\fontdimen2\font plus
\BIBentryALTinterwordstretchfactor\fontdimen3\font minus
  \fontdimen4\font\relax}
\providecommand{\BIBforeignlanguage}[2]{{%
\expandafter\ifx\csname l@#1\endcsname\relax
\typeout{** WARNING: IEEEtran.bst: No hyphenation pattern has been}%
\typeout{** loaded for the language `#1'. Using the pattern for}%
\typeout{** the default language instead.}%
\else
\language=\csname l@#1\endcsname
\fi
#2}}
\providecommand{\BIBdecl}{\relax}
\BIBdecl

\bibitem{social_group}
J.~Scott, \emph{Social network analysis}.\hskip 1em plus 0.5em minus
  0.4em\relax SAGE Publications Limited, 2012.

\bibitem{dist_selfish_repl}
N.~Laoutaris, O.~Telelis, V.~Zissimopoulos, and I.~Stavrakakis, ``Distributed
  selfish replication,'' \emph{Parallel and Distributed Systems, IEEE
  Transactions on}, vol.~17, no.~12, pp. 1401--1413, 2006.

\bibitem{social_similarity}
E.~Jaho, M.~Karaliopoulos, and I.~Stavrakakis, ``Social similarity favors
  cooperation: The distributed content replication case,'' \emph{IEEE
  Transactions on Parallel and Distributed Systems}, vol.~24, no.~3, pp.
  601--613, 2013.

\bibitem{freewhite}
M.~Feldman, C.~Papadimitriou, J.~Chuang, and I.~Stoica, ``Free-riding and
  whitewashing in peer-to-peer systems,'' in \emph{Proceedings of the ACM
  SIGCOMM workshop on Practice and theory of incentives in networked systems},
  ser. ACM PINS 2004.

\bibitem{overfreeride}
M.~Feldman and J.~Chuang, ``Overcoming free-riding behavior in peer-to-peer
  systems,'' \emph{SIGecom Exch.}, Jul. 2005.

\bibitem{cheapfreeriding}
T.~Locher, P.~Moor, S.~Schmid, and R.~Wattenhofer, ``Free riding in bittorrent
  is cheap,'' in \emph{In HotNets}, 2006.

\bibitem{effort_based}
R.~Rahman, M.~Meulpolder, D.~Hales, J.~Pouwelse, D.~Epema, and H.~Sips,
  ``Improving efficiency and fairness in p2p systems with effort-based
  incentives,'' in \emph{Communications (ICC), 2010 IEEE International
  Conference on}, may 2010, pp. 1 --5.

\bibitem{global_fair}
H.~Nishida and T.~Nguyen, ``A global contribution approach to maintain fairness
  in p2p networks,'' \emph{IEEE Transactions on Parallel and Distributed
  Systems}, vol.~21, no.~6, pp. 812--826, 2010.

\bibitem{free_riding}
M.~Karakaya, I.~Korpeoglu, and O.~Ulusoy, ``Free riding in peer-to-peer
  networks,'' \emph{Internet Computing, IEEE}, march-april 2009.

\bibitem{g2g_free}
J.~J.-D. Mol, J.~A. Pouwelse, M.~Meulpolder, D.~H. Epema, and H.~J. Sips,
  ``Give-to-get: free-riding resilient video-on-demand in p2p systems,'' in
  \emph{Electronic Imaging 2008}.\hskip 1em plus 0.5em minus 0.4em\relax
  International Society for Optics and Photonics, 2008, pp. 681\,804--681\,804.

\bibitem{sybil}
J.~R. Douceur, ``The sybil attack,'' in \emph{Peer-to-peer Systems}.\hskip 1em
  plus 0.5em minus 0.4em\relax Springer, 2002, pp. 251--260.

\bibitem{sybil_p2p}
H.-H. Dinger, J., ``Defending the sybil attack in p2p networks: taxonomy,
  challenges, and a proposal for self-registration,'' in \emph{Availability,
  Reliability and Security, 2006. ARES 2006.}, 2006, pp. 8 pp.--.

\end{thebibliography}
\end{document}